\newtheorem{theorem}{Theorem}
\newtheorem{lemma}[theorem]{Lemma}
\newtheorem{definition}[theorem]{Definition}
\newtheorem{conjecture}[theorem]{Conjecture}
\newtheorem*{conjecture*}{Conjecture}
\theoremstyle{definition}
\newtheorem{recipe}{Recipe}
\theoremstyle{definition}
\theoremstyle{remark}
\newcommand\BQEXP{{\sf{BQEXP}}}
\newcommand\QMA{{\sf{QMA}}}
\newcommand\QCMA{{\sf{QCMA}}}
\newcommand\PP{\sf{PP}}
\newcommand\expQCMA{{\sf{QCMA_{exp,poly}}}}
\newcommand\NP{{\sf{NP}}}
\newcommand\AM{{\sf{AM}}}
\newcommand{\eq}[1]{(\hyperref[eq:#1]{\ref*{eq:#1}})}
\newcommand{\ket}[1]{|#1\rangle}
\newcommand{\tth}[0]{\textsuperscript{th}}
\newcommand{\wf}[1]{{\color{violet}{[{\bf wf:}#1]}}}
\DeclareMathAlphabet{\matheu}{U}{eus}{m}{n}
\DeclareMathOperator{\tr}{tr}
\newcommand{\sop}[1]{{\mathcal #1}}
\newcommand{\I}{{\mathbb I}}
\newcommand{\poly}{\textrm{poly}}
\newcommand{\ketbra}[2]{|{#1}\rangle\!\langle{#2}|}
\newcommand{\even}{_\textrm{even}}
\newcommand{\odd}{_\textrm{odd}}
\newcommand{\orcl}{{\pmb{\sop O}}}
\newcommand{\SA}{{\mathbf{S}_X^n}}
\newcommand{\SB}{{\mathbf{S}_Y^n}}
\newcommand{\pc}{preimage-correct}
\newcommand{\rpc}{randomized-preimage-correct}
\newcommand{\spi}{\pmb{\sigma}_{\textrm{pre}}}
\newcommand{\sperm}{S_{\textrm{pre}}}
\newcommand{\Swit}{\mathbf{S}_{i,\textrm{wit}}}
\newcommand{\Sprime}{\mathbf{S}''}
\newcommand{\porcl}{{\pmb{\sop P}}}
\begin{document}

\title{Quantum vs. Classical Proofs and Subset Verification}
\author[1]{Bill Fefferman}
\affil[1]{Department of EECS, University of California at Berkeley, Berkeley CA, USA, \texttt{wjf@berkeley.edu}}
\author[2]{Shelby Kimmel}
\affil[2]{Department of Computer Science, Middlebury College, Middlebury, VT, USA, \texttt{{skimmel@middlebury.edu}}}

\date{}
\maketitle
\begin{abstract}
We study the ability of efficient quantum verifiers to decide properties of exponentially large subsets given either a classical or quantum
 witness. We develop a general framework that can be used to prove that
$\QCMA$ machines, with only classical witnesses, cannot verify certain properties of subsets given implicitly via an oracle. We use this framework
to prove an oracle separation between $\QCMA$ and $\QMA$ using
an ``in-place'' permutation oracle, making the first progress on this question since Aaronson and Kuperberg in 2007 \cite{AK07}.  We also use the framework
to prove a particularly simple standard oracle separation between $\QCMA$
and $\AM.$ 
\end{abstract}
 
\section{Introduction}

How much computational power does an efficient quantum verifier gain 
when given a polynomial sized quantum state to support the validity of a
mathematical claim?  In particular, is there a problem that can be
solved in this model, that cannot be solved if the
verifier were instead given a classical bitstring?  This question, the so-called
$\QMA$ vs. $\QCMA$ problem, is fundamental in quantum
complexity theory.  To complexity theorists, the question can be
motivated simply by trying to understand the power of quantum
nondeterminism, where both $\QMA$ and $\QCMA$ can be seen as ``quantum analogues''
of $\NP$.  More physically, $\QMA$ is characterized by the $k$-local
Hamiltonian problem, in which we must decide if the ground
state energy of a local Hamiltonian is above or below a specified
threshold \cite{KSV02,AN02}.  In this setting, the $\QMA$ vs. $\QCMA$
question asks whether there exists a
purely {\sl classical} description of the ground state that allows us
to make this decision.  For instance, if the ground state of any local
Hamiltonian can be prepared by an efficient quantum circuit, then
$\QMA=\QCMA$, as the classical witness for the $k$-local
Hamiltonian problem would be the classical description of this quantum
circuit.  It was this intuition that caused Aharonov and Naveh
to conjecture that these classes are equal, in the paper that first
defined $\QCMA$ \cite{AN02}.

It was recently established \cite{GKS} that the witness to a $\QMA$ machine may
always be replaced by a subset state, where a subset state
on $n$ qubits has the form $\ket{S}=1/\sqrt{|S|}\sum_{i\in
S}|i\rangle$ for some subset $S\subset[2^n]$.  However, 
it seems difficult to create a classical witness on $n$
bits that captures the information in a subset state $\ket{S}.$
Therefore, problems involving subsets seem like ripe ground for
understanding the $\QMA$ vs. $\QCMA$ problem. We investigate
the following question: under what circumstances is it possible
for a quantum machine to verify properties of a subset? This
question is not answered by \cite{GKS}; they study general
properties of languages that are in $\QMA$ and $\QCMA$, while
we attempt to prove specific languages of interest (that are
related to subsets) are either in or not in $\QMA$
or $\QCMA$.

In the hopes of further exploring these questions, we 
exhibit a general framework that can be used to obtain oracle
separations against $\QCMA$ for subset-based problems.
We use this  framework to prove the existence of an ``in-place''
permutation oracle $\porcl$ (a unitary that permutes standard basis
states within a single register) \cite{dBCW02,collision} for which
$\QMA^\porcl\not\subseteq \QCMA^\porcl$, making the first progress on
this problem since Aaronson and Kuperberg in 2007 \cite{AK07}, who attained a ``quantum oracle'' separation (i.e., a separation relative to an \emph{arbitrary} black-box unitary transformation acting on a polynomial number of qubits). In this
problem, for the case of $\QMA$,
the in-place permutation oracle allows us to verify that the given witness
is indeed the correct subset state. On the other hand, our framework
allows us to prove the language is not in $\QCMA.$  Our framework 
is quite general, and we are also able
to use it to  establish a particularly simple example of a
(conventional) oracle $\orcl$ so that
$\AM^{\orcl}\not\subseteq\QCMA^{\orcl}$.\footnote{Note there was
previously an example of an oracle separating $\AM$ from $\PP$
\cite{V95}.  Since $\QMA\subseteq\PP$ \cite{MW05}, this is formally a
stronger result.  Nonetheless, our oracle is substantially different,
and uses completely different ideas.}

\subsection{Subset-Verifying Oracle Problems}

We consider two oracle problems related to verifying
properties of subsets. In {\it{Subset Size
Checking}}, we are given a black box function
$f:[N^2]\rightarrow\{0,1\}$, that marks elements with either a 0 or 1.
We are promised that the number of marked items is either 
$N$ or $0.99N$, and we would like to decide which is the case. We want to verify the size of the subset marked by $f.$

In the other oracle problem, {\it{Preimage Checking}}, we are given a
black box permutation on $N^2$ elements. We are promised that the 
preimage of the first $N$ elements under the permutation is either mostly
even or mostly odd, and we would like to decide which is the case. In
this problem, we want to verify this parity property of a subset of the 
preimage of the function.

{\it{Subset Size Checking}}  is in {\AM} \cite{GS86}, and
we give a procedure that proves {\it{Preimage
Checking}} is in {\QMA} when the permutation is given as an in-place 
quantum oracle. An in-place permutation unitary $\sop
P_\sigma$ acts as $\sop P_\sigma\ket{i}=\ket{\sigma(i)}$ for a
permutation $\sigma$. For {\it{Preimage Checking}}, we are interested
in the set $S_{\textrm{pre}}(\sigma)=\{i:\sigma(i)\in[N]\}$. Given the subset state
$\ket{S_{\textrm{pre}}(\sigma)}$, it is easy to verify that the correct state
was sent, because $\sop P_\sigma\ket{S_{\textrm{pre}}(\sigma)}=\ket{[N]},$
which is easy to verify using a measurement in the Hadamard basis.

However, we do not expect subset-based oracle languages like
{\it{Subset Size Checking}} and {\it{Preimage
Checking}} to be in $\QCMA$ because the classical witness does not have
enough information to identify the relevant subset.
We make this intuition more precise by providing a general recipe for
proving that subset-verifying oracle languages are not in {\QCMA}. We
apply this procedure to show that both
{\it{Preimage Checking}} (with a randomized  in-place oracle - see Section
\ref{Sec:inplace} for more details) and {\it{Subset Size Checking}} 
are not in {\QCMA}. The procedure
involves familiar tools, like the adversary bound \cite{A00} (although
adapted to our in-place oracle when necessary), as well as
a new tool, the {\it{Fixing Procedure}}, which finds
subsets with nice structure within a large arbitrary set. We
now sketch the recipe: 

\begin{enumerate}

\item  We show that for every {\QCMA} machine, there are more
valid oracles than possible classical witnesses, so by
a counting argument, there must be one classical
witness $w^*$ that corresponds to a large number of potential oracles.
We then restrict ourselves to considering
oracles that correspond to $w^*.$

\item Because we are considering subset-verifying problems, 
if we have a collection of black box functions that corresponds to $w^*$,
we immediately have some set of subsets that corresponds to $w^*.$
At this point, we know nothing about this set of subsets
except its size, thanks to the counting argument. We next show
(using the {\it{Fixing Procedure}})
that if we have a set of subsets of a certain size, we can always
find a subset of the original set that has nice structure. 

\item We apply the adversary bound to the subset with
nice structure to show that the number of quantum queries needed
to distinguish between YES and NO cases is exponential.

\item We finally put these pieces together in a standard diagonalization argument.

\end{enumerate}

\subsection{Technical Contributions}\label{sec:technical}

Our adversary bound for in-place permutation oracles provides a query
lower-bounding technique for unitary oracles when access is {\it not} given 
to the inverse of the oracle.
(While Belovs \cite{B15} created an adversary bound for arbitrary
unitaries, his results assume access to an inverse.) While we typically 
assume quantum oracles include access to an inverse or are self-inverting, 
in open quantum systems it is natural to not have an inverse.

When proving that {\it{Preimage Checking}} is not in {\QCMA}, we
 use an oracle that is not unitary. The oracle is a completely-positive trace-preserving (CPTP)  map that at each application applies
one unitary  chosen uniformly at random from among a set of unitaries.
Standard lower bounding techniques fail for such an oracle.  The
closest result is from Regev and Schiff \cite{regev2008impossibility} who give a lower
bound on solving Grover's problem with an oracle that produces errors.
Regev and Schiff deal with the  non-unitarity of the map by modeling
the state of the system using pure states. This
strategy does not work in our case. Instead, we use the
fact that every non-unitary CPTP map can be implemented as a unitary
on a larger system.  In our case, we simulate our random oracle
using a unitary black box oracle acting on subsystem $B$, followed
by a fixed unitary that entangles subsystems $A$ and $B$. 
The entangling operation can
not be efficiently implemented, but as we are bounding query
complexity, this is acceptable.  This technique may be of use for
similar problems; for example, we do not know the query complexity of
solving Grover's problem with an oracle that produces a depolarizing
error with each application. A depolarizing map is similar to our CPTP
map in that both maps can be thought of as applying a unitary at
random from among a set of unitaries, and so perhaps this approach
will stimulate new approaches for the Grover problem.

\subsection{Impact and Directions for Future Research}

While Aaronson and Kuperberg have previously proved an oracle separation between
$\QMA$ and $\QCMA$ \cite{AK07}, their oracle seems to be especially quantum,
as it is defined by a Haar random quantum state. Our in-place oracle has
more of a classical feel, in that it encodes a classical permutation
function. However, it is still not a standard quantum oracle, as it is
not self-inverting.
 Is there a standard (i.e. not in-place)
oracle  language that separates {\QMA} and {\QCMA}? 
Although
we can only prove a separation when our in-place oracle
also has randomness, we believe our techniques could be 
adapted to prove a similar result but without oracle randomness. 
While we give a recipe for showing certain subset-based 
problems are not in $\QCMA$, we believe some of
these problems are also not in $\QMA$; for example, is it possible
to prove {\it{Subset Size Checking}} is not in $\QMA$?

Our contributions to techniques for lower bounding query
complexity for non-standard oracles raise several questions. Is there
a general adversary bound \cite{HLS07,LMR+11} for in-place permutation oracles? There are
examples of problems for which in-place permutation oracles require
exponentially fewer queries that standard permutation oracles e.g.,
\cite{AMR+11}. We conjecture that
there are also examples of problems for which standard permutation oracles
require exponentially fewer queries than in-place oracles. In fact, we
do not believe it is possible to obtain a Grover-type speed-up with an
in-place oracle; we believe the problem of determining the inverse of
an element of an in-place permutation oracle requires $N$ queries for
a permutation on $N$ elements. However, in order to prove these
results, we suspect one needs a more powerful tool, like a general
adversary bound for in-place oracles.

\subsection{Organization}
The rest of the paper is organized as follows, in Section
\ref{sec:notation}, we introduce notation that will be used throughout
the rest of the paper, and define $\QMA$ and $\QCMA$. In Section 
\ref{Sec:inplace}, we define and discuss standard,  in-place, and randomized in-place quantum
permutations, as well as state an adversary lower bound
for in-place permutation unitaries. In Section \ref{sec:QMA}, we
describe the {\it{Preimage Checking}} Problem and prove it is in $\QMA$. In 
Section \ref{sec:conditions}, we lay out the general recipe for
proving subset-based languages are not in {\QCMA}. In Section 
\ref{sec:counting}, we apply this procedure to the {\it{Subset Size Checking}}
problem, 
and use it to prove an oracle separation between {\AM} and {\QCMA}. 
Finally, in Section \ref{sec:QCMA}, we apply the procedure to {\it{Preimage 
Checking}} and show this problem is not in {\QCMA}. 

\section{Definitions and Notation}\label{sec:notation}

We use the notation $[M]=\{1,2,\dots, M\}$.  $\sigma$ will refer to a
permutation.  Unless otherwise specified, the sets we use, generally denoted $S$, will be a set of positive integers. Also, we will use bold type-face to denote a set of sets.  For instance, $\mathbf{S}$ will refer to a set of sets of positive integers.  To make our notation clearer, we will refer to such a set of sets as a \emph{set family}.  Likewise, we denote $\pmb{\sigma}$ to be a set of permutations acting on the same set of elements.

For $S$ a set of positive integers, a subset
state $\ket{S}$ is $\ket{S}=\frac{1}{\sqrt{|S|}}\sum_{i\in S}\ket{i}.$

Throughout, we use $N=2^n$. All logarithms are in base 2.
We use $\pmb{\sigma}^n$ to be the set of permutations acting on $N^2$ elements.
That is, if $\sigma\in \pmb{\sigma}^n$, $\sigma:[N^2]\rightarrow[N^2]$.
For positive integers $i>j$, let $\mathbf{C}(i,j)$ be the set family
containing $j$ elements of $[i]:$ $\mathbf{C}(i,j)=\{S\subset[i]:|S|=j\}.$

We use calligraphic font $\sop P$, $\sop U$ to denote unitary operations. We use
elaborated calligraphic font $\mathscr P$, $\mathscr U$ to denote CPTP maps.
For a unitary CPTP map $\mathscr U$ acting on a density matrix $\rho$, we have
$\mathscr U(\rho)=\sop U\rho\sop U^\dagger,$
where $(\cdot)^\dagger$ denotes conjugate transpose. We will use $\sop O$ to denote a unitary oracle, and $\mathscr O$ to denote a CPTP map oracle.

We include the following standard definition for completeness (e.g., see also
\cite{complexityzoo,AK07}).

\begin{definition}$\QMA$ is the set of promise problems $A=(A_{Yes},A_{No})$ so that there exists an efficient quantum verifier $V_{A}$ and a polynomial $p(\cdot)$: \label{def:qma}
\begin{enumerate}
\item{\emph{Completeness}: For all $x\in A_{Yes}$ there exists a $p(|x|)$-qubit pure quantum state $|\psi\rangle$ so that $\Pr\left[V_A(x,|\psi\rangle)=1\right]\geq 2/3$}
\item{\emph{Soundness}: For all $x\in A_{No}$ and any pure quantum state $|\psi\rangle$, $\Pr\left[V_A(x,|\psi\rangle\right)=1]\leq 1/3$.}    
\end{enumerate}
$\QCMA$ is the same class, with the witness $|\psi\rangle$ replaced by a polynomial length classical bitstring.  \end{definition}


\section{Permutation Maps}\label{Sec:inplace}

\subsection{Permutations as Oracles: In-Place Permutation vs. Standard Permutation}
Black box permutation unitaries have been considered previously, most notably in the collision
and element distinctness problems \cite{collision,AS2004}. However,
the unitaries considered in these works were standard
oracles.  A standard oracle implements
the permutation $\sigma\in\pmb{\sigma}^n$ as
$\sop P^{\textrm{stand}}_{\sigma}\ket{i}\ket{b}=\ket{i}\ket{b\oplus\sigma(i)}$
for $i,b\in[N^2]$, where $\ket{i}$ for $i\in[N^2]$ are standard basis
states and $\oplus$ denotes bitwise XOR. Note that $(\sop P^{\textrm{stand}}_{\sigma})^2=\I_{N^4}$; 
that is, acting with the
unitary twice produces the $N^4\times N^4$ identity operation.

We consider in-place permutation unitaries, which
implement the permutation $\sigma\in\pmb{\sigma}^n$ as
$\sop P_{\sigma}\ket{i}=\ket{\sigma(i)}.$
In general $(\sop P_{\sigma})^2\neq \I_{N^2}$. Crucially, given
black box access to $\sop P_{\sigma}$,
{\it{we do not give black box access to its inverse.}} In fact,
in Section \ref{sub:ambainis}, we show that given only $\sop P_{\sigma}$, it is hard to
invert its action. Non-self-inverting permutation unitaries have been 
considered previously, in \cite{dBCW02,collision}.

We believe standard and in-place permutation unitaries 
are of incomparable computational power. That is, given one type that implements
$\sigma$, you can not efficiently simulate the other type
implementing the permutation $\sigma$.  For example, if we have
the state $\sum_{y\in S}|y\rangle|\sigma(y)\rangle$ (normalization omitted), we
can create the state $\sum_{y\in S}|y\rangle|0\rangle$ with a
single query to $\sop P_{\sigma}^{\textrm{stand}}$.  However, if we only
have access to the in-place permutation $\sop
P_{\sigma}$ and not to $\sop
P_{\left(\sigma\right)^{-1}}=\left(\sop
P_{\sigma}\right)^{-1}$, it seems difficult to create this state.

On the other hand, suppose we want to prepare the state $\sum_{y\in
[N]}|\sigma(y)\rangle$ (normalization omitted).  We can create this state in one query
to the in-place permutation oracle $\sop P_{\sigma}$
by applying the oracle to the uniform superposition $\sum_{y\in
[N]}|y\rangle$. In the standard model,
this problem is called ``index erasure,'' and requires an exponential
number of queries in $n$ to $\sop P_{\sigma}^{\textrm{stand}}$ \cite{AMR+11}.

\subsection{An Adversary Bound for In-Place Permutation Oracles}\label{sub:ambainis}

In Appendix \ref{sec:ambainis}, we prove a non-weighted adversary
bound for in-place permutations oracles that is identical to what Ambainis proves in Theorem 6 in \cite{A00}
for standard permutation oracles. 
\begin{restatable}{lemma}{amb}\label{lemm:ambainis}
Let $\pmb{\sigma}\subset[V]\rightarrow[V]$ be a subset of permutations
acting on the elements $[V]$. Let $f:\pmb{\sigma}\rightarrow\{0,1\}$
be a  function of permutations. Let
$\pmb{\sigma}_X\subset \pmb{\sigma}$ be a set of permutations such that if $\sigma\in
\pmb{\sigma}_X$, then $f(\sigma)=1$. Let
$\pmb{\sigma}_Y\subset \pmb{\sigma}$ be a permutation family such that if $\sigma\in \pmb{\sigma}_Y$ 
then $f(\sigma)=0$. Let $R\subset \pmb{\sigma}_X\times \pmb{\sigma}_Y$ be such that
\begin{itemize}
\item For every $\sigma_x\in \pmb{\sigma}_X$, there exists at least $m$ 
different $\sigma_y\in \pmb{\sigma}_Y$ such that
$(\sigma_x,\sigma_y)\in R$.
\item For every $\sigma_y\in \pmb{\sigma}_Y$, there exists at least $m'$ 
different $\sigma_x\in \pmb{\sigma}_X$ such that
$(\sigma_x,\sigma_y)\in R$.
\item 
Let $l_{x,i}$ be the number of $\sigma_y\in \pmb{\sigma}_Y$ such that
$(\sigma_x,\sigma_y)\in R$ and $\sigma_x(i)\neq \sigma_y(i).$ Let
$l_{y,i}$ be the number of $\sigma_x\in \pmb{\sigma}_Y$ such that
$(\sigma_x,\sigma_y)\in R$ and $\sigma_x(i)\neq \sigma_y(i).$ Then let
$l_{max}=\max_{(\sigma_x,\sigma_y)\in R,i}l_{x,i}l_{y,i}.$
\end{itemize}
Then given an in-place permutation oracle $\sop P_{\sigma}$ for $\sigma\in\pmb{\sigma}$ that acts as
$\sop P_{\sigma}\ket{i}=\ket{\sigma(i)},$
any quantum algorithm
that correctly evaluates $f(\sigma)$ with probability $1-\epsilon$ for every element
of $\pmb{\sigma}_X$ and $\pmb{\sigma}_Y$
must use $\left(1-2\sqrt{\epsilon(1-\epsilon)}\right)\sqrt{\frac{mm'}{l_{max}}}$ queries
to the oracle.
\end{restatable}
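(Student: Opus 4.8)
The plan is to mirror Ambainis's original hybrid/adversary argument (Theorem 6 in \cite{A00}) as closely as possible, since the statement is claimed to be identical; the only real work is checking that the in-place access model $\sop P_\sigma\ket{i}=\ket{\sigma(i)}$ — crucially \emph{without} the inverse — does not break any step. First I would set up the standard adversary machinery: fix a purported algorithm making $T$ queries that evaluates $f$ with success probability $1-\epsilon$ uniformly over $\pmb\sigma_X\cup\pmb\sigma_Y$. Run it in superposition over the input permutation, so the global state after $t$ queries is $\ket{\phi_\sigma^t}=\sum_\sigma \alpha_\sigma \ket{\sigma}\ket{\psi_\sigma^t}$ on a register holding $\sigma$ plus the algorithm's workspace. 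Define the progress measure $W_t=\sum_{(\sigma_x,\sigma_y)\in R} \beta_{\sigma_x}\beta_{\sigma_y}\,|\langle\psi_{\sigma_x}^t|\psi_{\sigma_y}^t\rangle|$ with weights chosen (as in the unweighted bound) proportional to $1/\sqrt{m}$ on the $X$ side and $1/\sqrt{m'}$ on the $Y$ side, normalized so $W_0=1$. The two things to establish are: (i) $W_T$ is small — at most $2\sqrt{\epsilon(1-\epsilon)}$ — because at the end the $X$ and $Y$ states must be nearly orthogonal (they lead to different measurement outcomes with probability $\ge 1-\epsilon$), and this step is model-independent; and (ii) each query can only decrease $W_t$ by a bounded amount, $W_{t-1}-W_t \le 2\sqrt{l_{max}/(mm')}$, which combined with (i) gives $T\ge (1-2\sqrt{\epsilon(1-\epsilon)})\sqrt{mm'/l_{max}}$.

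The heart of the matter — and the step I expect to be the main obstacle — is bounding the per-query drop (ii) in the in-place model. For a standard oracle $\sop P^{\textrm{stand}}_\sigma\ket{i}\ket{b}=\ket{i}\ket{b\oplus\sigma(i)}$, the inner product $\langle\psi_{\sigma_x}^t|\psi_{\sigma_y}^t\rangle$ changes only on the part of the workspace where the query register holds an index $i$ with $\sigma_x(i)\ne\sigma_y(i)$, and one bounds the total change by a Cauchy–Schwarz argument across such $i$, yielding the $l_{x,i}l_{y,i}$ / $l_{max}$ factors. In the in-place model the query is the unitary $\sop P_\sigma$ acting as a permutation matrix on the index register; the key observation I would use is that $\sop P_{\sigma_x}$ and $\sop P_{\sigma_y}$ agree on every basis vector $\ket{i}$ with $\sigma_x(i)=\sigma_y(i)$, so $\sop P_{\sigma_x}-\sop P_{\sigma_y}$ is supported on $\mathrm{span}\{\ket{i}:\sigma_x(i)\ne\sigma_y(i)\}$ on the input side and, by bijectivity, on $\mathrm{span}\{\ket{j}: \sigma_x^{-1}(j)\ne\sigma_y^{-1}(j)\}$ on the output side. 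Thus the change in $|\langle\psi_{\sigma_x}^t|\psi_{\sigma_y}^t\rangle|$ is still controlled by the overlap of the pre-query states with this ``disagreement'' subspace, and the counting quantities $l_{x,i},l_{y,i}$ are exactly designed to track it. Notably the lack of an inverse oracle never enters: we only ever apply $\sop P_\sigma$ forward, both in the algorithm and in the analysis.

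Concretely, to push through (ii) I would write the query step as $\ket{\psi^t}=(\sop P_\sigma\otimes \I)\ket{\psi^{t-1}}$ on the (index register)$\otimes$(rest), decompose $\ket{\psi_{\sigma_x}^{t-1}}$ and $\ket{\psi_{\sigma_y}^{t-1}}$ according to whether the index register lies in the common-fixed subspace or its complement, and note that on the common subspace the two post-query states transform identically, contributing nothing to the change; the entire change comes from the complementary components, whose amplitudes I bound using the normalization of the states together with the combinatorial bounds $l_{x,i},l_{y,i}\le\sqrt{l_{max}}$ after a Cauchy–Schwarz step over the disagreement indices. Summing $W_{t-1}-W_t\le 2\sqrt{l_{max}/(mm')}$ over $t=1,\dots,T$ and combining with $W_0=1$, $W_T\le 2\sqrt{\epsilon(1-\epsilon)}$ yields the claimed query lower bound. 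The remaining items — verifying $W_0=1$ with the chosen weights, and that near-perfect distinguishability at the end forces $W_T\le 2\sqrt{\epsilon(1-\epsilon)}$ — are routine and identical to the standard-oracle proof, so I would state them and refer to \cite{A00} for the unchanged parts while spelling out only the in-place query-update computation in full in the appendix.
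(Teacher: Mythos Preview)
Your proposal is correct and follows essentially the same approach as the paper's proof: both set up the Ambainis progress measure over pairs in $R$, both identify that the in-place oracle's effect on the inner product is governed by the disagreement set $\{i:\sigma_x(i)\ne\sigma_y(i)\}$, and both use the bijectivity observation (your input-side/output-side support remark is precisely the paper's $T_{x,y}=U_{x,y}$ argument) to cancel the agreement terms before applying the weighted AM--GM/Cauchy--Schwarz step. One small slip to fix when you write it out: it is not that $l_{x,i},l_{y,i}\le\sqrt{l_{max}}$ individually, but rather $l_{x,i}l_{y,i}\le l_{max}$, which is exactly what the AM--GM step with weights $\sqrt{l_{y,i}/l_{x,i}}$ needs.
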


As a corollary of Lemma \ref{lemm:ambainis}, (using exactly the same
technique as Theorem 7 in \cite{A00}), we have that the query
complexity of inverting an in-place permutation oracle on $V$ elements is $\Omega(V^{1/2}).$

\subsection{Permutations with Randomness}

Additionally, we consider in-place permutation oracles with internal
randomness that are CPTP (completely-positive trace-preserving)
maps, rather than unitaries.
Oracles with internal randomness were shown to  cause a complete loss
of quantum speed-up in \cite{regev2008impossibility}, while in
\cite{harrow2014uselessness}, such oracles were shown to give an
infinite quantum speed-up.

We consider oracles that apply an in-place permutation
at random from among a family of possible permutations. Let $\pmb{\sigma}\subseteq\pmb{\sigma}^n$
be a set of $|\pmb{\sigma}|$ permutations. 
Then the CPTP map $\mathscr P_{\pmb{\sigma}}$ acts
as follows:
\begin{align}
\mathscr P_{\pmb{\sigma}}(\rho)=
\frac{1}{|\pmb{\sigma}|}\sum_{\sigma\in\pmb{\sigma}}
\sop P_\sigma\rho\sop P_\sigma^\dagger.
\end{align}


\section{Pre-Image Checking}\label{sec:QMA}

In this section, we define a property of oracle families which we call
{\it{\rpc}}, and construct a decision language based on such
oracles that is in $\QMA$.  Essentially, the problem is to decide
whether the preimage of the first $N$ elements of a permutation
is mostly even or odd. 
 
Given a permutation $\sigma\in\pmb{\sigma}^n$, we associate a preimage subset
$\sperm(\sigma)$ to that permutation (``pre'' is for ``preimage''), where
$\sperm(\sigma)=\{j:\sigma(j)\in[N]\}.$
That is, $\sperm(\sigma)$ is the subset of elements in $[N^2]$ whose image under
$\sigma$ is in $[N]$.
Additionally, to each subset $S\subseteq[N^2]$ with $|S|=N$, we associate a set
of permutations $\spi(S)$, where $\spi(S)=\{\sigma:\sigma\in\pmb{\sigma}^n, \sperm(\sigma)=S\}.$
Let
\begin{align}
\pmb{S}^n\even&=\{S:S\subset[N^2],|S|=N,|S\cap\mathbb{Z}\even|=2/3 N\}\nonumber\\
\pmb{S}^n\odd&=\{S:S\subset[N^2],|S|=N,|S\cap\mathbb{Z}\odd|=2/3 N\}.
\end{align}


\begin{definition}[{\rpc} oracles]\label{def:rpc}
Let $\orcl$ be a countably infinite set of quantum operators (CPTP maps): 
$\orcl=\{\mathscr O_1,\mathscr O_2,\dots\}$, 
where each $\mathscr O_n$ implements an operation on $(2n)$-qubits. 
We say that $\orcl$ is {\rpc}
if for every $n$, $\mathscr O_n=\mathscr P_{\pmb{\sigma}_\emph{pre}(S)}$, with 
$S\in\pmb{S}^n\even\cup \pmb{S}^n\odd$.
\end{definition}

\begin{theorem}\label{thm:qma}
For any {\rpc} $\orcl$, the unary
language $L_{\orcl}$, which contains those unary strings $1^n$ such that
$\mathscr O_n=\mathscr P_{\pmb{\sigma}_\emph{pre}(S)}$ with $S\in\pmb{S}^n\even$, is in
$\QMA^{\orcl}$.
\end{theorem}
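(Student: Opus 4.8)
The plan is to build a $\QMA$ verifier for $L_{\orcl}$ that uses the subset state $\ket{S}$ (where $S = \sperm(\sigma)$ for any $\sigma$ in the support of $\mathscr O_n$) as its witness, together with the observation from the introduction that applying an in-place permutation to $\ket{S}$ maps it to $\ket{[N]}$. The verifier receives a purported $N$-dimensional state $\ket{\psi}$ on $n$ qubits (padded appropriately to $2n$ qubits as needed). It must do two things: (i) check that $\ket{\psi}$ really is (close to) a subset state $\ket{S}$ with $S = \sperm(\sigma)$, i.e. a set of the right preimage structure; and (ii) check that the parity condition distinguishing $\pmb{S}^n\even$ from $\pmb{S}^n\odd$ holds, namely that $2/3$ of the elements of $S$ are even.

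First I would handle the verification-of-the-subset-state step. The key structural fact is $\sop P_\sigma \ket{\sperm(\sigma)} = \ket{[N]}$. So the verifier runs the oracle $\mathscr O_n$ on (a copy of) the witness register and then measures in the Hadamard basis, accepting iff it sees the all-zeros outcome on the last $n$ qubits (projecting onto $\ket{[N]} = H^{\otimes n}\ket{0^n}$ on the relevant register). Since $\mathscr O_n = \mathscr P_{\pmb{\sigma}_{\textrm{pre}}(S)}$ is a mixture of in-place permutations all of which send $\ket{S}$ to $\ket{[N]}$, the honest witness passes with certainty. The more delicate direction is soundness of this check: I need that any state accepted with high probability has large overlap with $\ket{S}$. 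Since each $\sop P_\sigma$ in the mixture is unitary and fixes $\ket{[N]}$ only on the subspace it maps from $\ket{S}$, I would argue via a standard gentle-measurement / Fuchs-van de Graaff argument that passing with probability $1-\delta$ forces fidelity $\geq 1 - O(\delta)$ with $\ket{S}$ (using that $\mathscr O_n$ is a channel and the accepted projector is rank-one, so $\tr[\Pi \mathscr O_n(\rho)]$ being large bounds $\bra{S}\rho\ket{S}$).

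Next I would handle the parity step: given a state close to $\ket{S}$, estimate the fraction of basis-state labels in $S$ that are even. This is done by measuring the witness register in the computational basis and reading off the parity of the label's last bit; repeating over a polynomial number of fresh witness copies and taking a Chernoff-type estimate distinguishes the $2/3$-even promise from the $2/3$-odd promise with a constant gap. Because the promise ensures the two cases are separated by a constant, polynomially many copies suffice, and the soundness of step (i) guarantees that a cheating prover cannot slip in a state whose label distribution is far from uniform on $S$. I would combine the two tests by asking for polynomially many copies of the witness, running the Hadamard-basis oracle test on some copies and the parity-estimation on the others, and accepting iff both pass; completeness is immediate from the honest $\ket{S}^{\otimes \poly(n)}$ witness, and soundness follows by a union bound over the two failure modes.

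The main obstacle I anticipate is making the soundness of step (i) fully rigorous in the presence of the \emph{randomized} (CPTP, not unitary) oracle: I must be careful that the accept probability of $\mathscr O_n$ on an arbitrary (possibly entangled across copies) cheating state still certifies closeness to the \emph{single} subset state $\ket{S}$, rather than to some convex combination that the prover could exploit in the parity test. The fix is that $\ket{S}$ is the same for every $\sigma$ in the support of $\mathscr O_n$ (they all have $\sperm(\sigma) = S$), so the mixture does not enlarge the accepting subspace; one then reduces to the unitary case termwise and applies the standard purification/gentle-measurement bound. A secondary routine point is bookkeeping the qubit counts ($n$ vs. $2n$) and the padding, which is immaterial to the argument.
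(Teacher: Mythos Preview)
Your high-level plan matches the paper's: witness is the subset state $\ket{S}$; one test applies the oracle and checks for $\ket{[N]}$ via a Hadamard-basis measurement; another test samples parity. But the multi-copy soundness argument, as you state it, has a genuine gap.

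The problem is not the CPTP oracle. As you correctly observe at the end, every permutation in the support of $\mathscr O_n$ has the same preimage set $S$, so $\sop P_\sigma^\dagger\ket{[N]}=\ket{S}$ for each of them; the Test~(i) acceptance probability on any single register is exactly $\bra{S}\rho\ket{S}$, no averaging subtleties. The real obstacle is the one you only mention in passing: a cheating prover need not send identical copies. If your assignment of tests to registers is fixed (as ``running Test~(i) on some copies and the parity estimation on the others'' suggests), the prover can place $\ket{S}$ on the Test~(i) registers and an all-even state supported outside $S$ on the parity-test registers; Test~(i) then passes with probability $1$ and the parity test sees only even outcomes. Your ``union bound over the two failure modes'' does not rule this out, and a gentle-measurement bound on one register says nothing about a \emph{different} register of an entangled witness.

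The paper avoids the multi-copy issue entirely: it uses a single $2n$-qubit witness and flips a fair coin to choose which of the two tests to run on that one register. Its Test~(ii) also differs from yours: after measuring a standard-basis outcome $i^*$, it checks that $i^*$ is even \emph{and} queries the oracle on $\ket{i^*}$ to confirm $\sigma(i^*)\in[N]$, i.e.\ that $i^*\in S$. With this extra oracle check the Test~(ii) acceptance probability becomes $\sum_{i\in S\cap\mathbb Z\even}|\beta_i|^2$ rather than $\sum_{i\in\mathbb Z\even}|\beta_i|^2$, which is what makes the direct Cauchy--Schwarz/triangle-inequality calculation go through to yield soundness $\le 2/3$ against completeness $5/6$ on a single copy. (A small bookkeeping correction: $S\subset[N^2]$, so $\ket{S}$ is a $2n$-qubit state, not $n$-qubit.)
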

\begin{proof}

We first prove completeness. We assume $1^n\in L_{\orcl}$, so $\mathscr
O_n=\mathscr P_{\spi(S)}$ for some $S\in \pmb{S}^n\even.$
We consider using as a witness the subset state $\ket{S}$ on $2n$
qubits. 
We analyze the following verifier: with probability
$1/2$, do either
\begin{enumerate}[{\bf{Test}} (i)]

\item Apply $\mathscr P_{\spi(S)}$ to $\ket{S}$, and
measure whether the resultant state is
$\ket{[N]}.$ This measurement can be done by applying $H^{\otimes n}$ to the first
$n$ qubits, and then measuring all qubits in the standard
basis.  If the outcome is 0, output 1;
otherwise, output 0.

\item Measure $\ket{S}$ in the standard basis. Let $i^*$ be the resulting
standard basis state. If $i^*$ is odd, output 0. Otherwise, 
apply $\mathscr P_{\spi(S)}$ to $\ket{i^*}$ and measure the resultant
standard basis state. If the resultant
state is not in $[N]$, output 0; otherwise, output 1. 

\end{enumerate} 
If Test (i) is implemented, the verifier always outputs 1 because
all the permutations that might be applied by $\mathscr P_{\spi(S)}$ transform $\ket{S}$
into $\ket{[N]}$. If Test (ii) is implemented, 
the verifier outputs 1 with probability 2/3. Averaging over both Tests, the verifier
 outputs 1 with probability 5/6.

Now we show soundness. Let $1^n\notin L_{\orcl}$, so $\mathscr O_n=\mathscr
P_{\spi(S)}$ for some $S\in\mathbf{S}^n\odd$. 
%
Without loss of generality, let the witness be the $2n$-qubit state
$\ket{\psi(S)}=\sum_{i=1}^{N^2}\beta_i\ket{i}$.
If $p_{(i)}$ (resp. $p_{(ii)}$) is the
 probability the verifier outputs 1 after performing Test (i) (resp.
 Test (ii)), then we have
\begin{align}
p_{(i)}=\frac{1}{N}\left|\sum_{i\in S}\beta_i\right|^2,
&&p_{(ii)}=\sum_{i\in \mathbb Z\even \cap S}|\beta_i|^2.
\end{align}
regardless of which permutation the map $\mathscr P_{\spi(S)}$ applies. 

Using Cauchy-Schwarz and the triangle inequality, we have
$1\geq\left(\sqrt{3p_{(i)}}+
(\sqrt{2}-1)p_{(ii)}\right)/\sqrt{2}.$
Thus the total probability that the verifier outputs 1 is
\begin{align}
\frac{1}{2}\left(p_{(i)}+p_{(ii)}\right)\leq 
\frac{1}{2}\left(\frac{2}{3}\left(1-\frac{\sqrt{2}-1}{\sqrt{2}}p_{(ii)}\right)^2+p_{(ii)}\right).
\end{align}
The derivative of the right hand side is positive
for $0\leq p_{(ii)}\leq 1$, so to maximize the right hand side we take
$p_{(ii)}=1.$ Doing this, we find the probability that the verifier outputs 1 is
at most $2/3.$
\end{proof}

We will show that the Preimage Checking problem is not in $\QCMA$ in Section \ref{sec:QCMA}.

Our proof that the Preimage Checking problem is in $\QMA$ works equally well
for an in-place oracle without randomness. We use the randomness in our oracle
in the proof that {\rpc} languages can not be decided by $\QCMA.$ We believe the separation holds
even without randomness in the oracle.

\section{Strategy for Proving Subset-Based Oracle Languages are not in $\QCMA$}
\label{sec:conditions}

In this section, we describe a general strategy for showing that certain
oracle languages are not in $\QCMA$. In particular, we consider the case when
the oracles are related to sets of integers:
\begin{definition}[Subset-Based Oracle]\label{def:subsetLanguage}
Let $\orcl=\{ \mathscr O_1,\mathscr O_2,\dots \}$ be an oracle such that each
$\mathscr O_n$ implements a $p_1(n)$-qubit CPTP map from some set of maps
$\orcl^n$. Then we say $\orcl$ is a subset-based oracle if there exists a
set of bijective functions $\{g^1,g^2,\dots\}$ with
$g^n:\orcl^n\rightarrow \mathbf{S}^n$ where $\mathbf{S}^n$ is the union of disjoint subset families $\SA$ and
$\SB$.


\end{definition}


We also use the following definition:
\begin{definition}\label{def:distr}
Given a subset family $\mathbf{S}$ containing subsets of positive integers, and $\beta\in \mathbb{R}$ such that $\beta>0$, we say $\mathbf{S}$ is $\beta$-distributed if:
\begin{enumerate}[(1)]
\item There exists a (possibly empty) set $S_{\emph{fixed}}$ such that 
$S_{\emph{fixed}}\subset S $ for all $ S\in \mathbf{S}$.
\item For every element $i\in\left(\bigcup_{S\in\mathbf{S}}S\right)\setminus S_{\emph{fixed}}$, 
$i$ appears in at 
most a $2^{-\beta}$-fraction of $S\in \mathbf{S}$.
\end{enumerate}
\end{definition}
\noindent We call $S_{\textrm{fixed}}$ the ``fixed subset'' of $\mathbf{S}.$

We use the following Recipe for proving a subset-based oracle language is not in $\QCMA$:
\vspace{.2cm}
\begin{recipe}\label{Recipe}
~\\
\begin{enumerate}
\item [Set-up:] Fix some enumeration
over all $\poly(n)$-size quantum verifiers
$M_1,M_2,...$,  which we can do because the number of such machines is
countably infinite (by the Solovay-Kitaev theorem
\cite{solovaykitaev}).  Some of these verifiers
may try to decide a language by trivially ``hardwiring'' its outputs; for example, by returning $1$
independent of the input. We start by fixing a unary language $L$ such that
no machine $M_i$ hardwires the language. We can always do this because
there are more languages than $\poly(n)$-sized machines. Then
our goal is to associate a subset-based oracle $\orcl=\{\mathscr O_1,\mathscr O_2,\dots\}$ with $L$, such that $1^n\in L$ if and only if $g^n(\mathscr O_n)\in \SA$, and to show that even with access to $\orcl$, no $M_i$ can efficiently decide $L$ for all $n.$ 

Consider the $\QCMA$ machine $M_i$, and suppose it is given access to a subset-based oracle $\orcl$, as well as a witness of $p_{M_i}(n)$ bits for each input $1^n$. Then for each $\mathscr O_n\in\orcl$ there is some subset of integers $S\in \mathbf{S}^n$ such that $g^n(\mathscr O_n)=S$. Since $g^n$ is bijective, $S$ uniquely defines $\mathscr O_n$, so the optimal witness that causes $M_i$ to accept $\mathscr O_n$ can be thought of as a function of $S.$ Let $w_i(S)$ be the witness that
gives the highest probability of success in convincing $M_i$ that $S\in \SA$.
Then we denote $\Swit(w)=\{S:S\in
\SA, w=w_i(S)\}.$

Using the pigeonhole principle, there exists some string $w_{i,n}$
of $p_{M_i}(n)$ bits such that
\begin{align} \label{eq:def_Sw}
|\Swit(w_{i,n})|\geq \frac{1}{2^{p_{M_i}(n)}}\left|\SA\right|.
\end{align}
That is, there exists a witness such that a large number
of subsets correspond to that witness.

\item Prove that for $n\geq n_i^*$, there is a subset family $\mathbf{S}_X\subseteq\Swit(w_{i,n})$ that is $\alpha$-distributed with fixed subset $S_{\textrm{fixed}}$. Let $\mathbf{S}_Y=\{S:S\in \SB, S_\textrm{fixed}\subset S\}$. Show the cardinality of $\mathbf{S}_Y$ is large. 

\item Create a relation $\mathbf{R}\subseteq
\{\mathscr O: \mathscr O\in \orcl^n, g(\mathscr O)\in \mathbf{S}_X\}
\times \{\mathscr O: \mathscr O\in \orcl^n, g(\mathscr O)\in \mathbf{S}_Y\}$ and use $\mathbf{R}$ to apply an
adversary bound to prove a lower bound of $\Omega(N^{\alpha/2})=\Omega(2^{n\alpha/2})$  on the number
of queries $M_i$ requires to distinguish some oracle $\mathscr O_{x,n,i}\in \orcl^{n}$ such that $g^n(\mathscr O_{x,n,i})\in \SA$ from an oracle $\mathscr O_{y,n,i}\in \orcl^{n}$ such that $g^n(\mathscr O_{y,n,i})\in \SB$.

\item Apply a standard Baker-Gill-Solovay diagonalization argument \cite{BGS} to complete the proof. That is, for each $M_i$, choose a unique $n_i\geq n_i^*$, and if $1^{n_i}\in L$, set $\mathscr O_{n_i}=\mathscr O_{x,n_i,i}$ and if $1^{n_i}\notin L$, set $\mathscr O_{n_i}=\mathscr O_{y,n_i,i}$. Then no $\QCMA$ machine can efficiently decide the language.
\end{enumerate}
\end{recipe}

\section{Subset Size Checking}\label{sec:counting}

In this section, we create a subset-based oracle language $L_\orcl$,
such that $L_\orcl\in \AM^\orcl$, but $L_\orcl\notin \QCMA^\orcl.$ We use
the strategy of Section \ref{sec:conditions} to prove $L_\orcl\notin \QCMA^\orcl$.

Let $f_S$ be a function that marks a subset $S\subset [N^2]$. That is $f_S:\{0,1\}^{2n}\rightarrow\{0,1\}$,
such that $f_S(i)=1$ if $i\in S$ and $0$ otherwise. Let $\sop F_S$ be the unitary such that 
$\sop F_S\ket{i}=(-1)^{f_S(i)}\ket{i}.$

\begin{definition}
Let $\orcl$ be a countably infinite set of unitaries (resp. boolean functions): $\orcl=\{\sop O_1,\sop O_2,...\}$,
where $\sop O_n$ implements a $2n$-qubit (resp. bit) unitary (function). We say $\orcl$ is subset-gapped if for 
every $n$, $\sop O_n=\sop F_S$ (resp. $\sop O_n=f_S$) for $|S|=N$ or $|S|=0.99N$. 
\end{definition}
\noindent Clearly $\orcl$ is a subset-based oracle (see Definition \ref{def:subsetLanguage}), with $g^n(\sop O_n)=g^n(\sop F_S)=S.$

Then the following two lemmas give the desired oracle separation between $\AM$ and $\QCMA$:

\begin{lemma}\label{lemma:gapped}
For any subset-gapped $\orcl$, the language $L_\orcl$ that contains those strings $1^n$ such that 
$\sop O_n=f_S$ with $|S|=N$, is in $AM^{\orcl}$.
\end{lemma}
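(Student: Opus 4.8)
The plan is to run the Goldwasser--Sipser set lower bound protocol \cite{GS86}, after a preliminary gap-amplification step that converts the promised constant-factor gap between the two cases ($|S|=N$ versus $|S|=0.99N$, with $S\subset[N^2]$) into an exponential one. Fix a constant $k$ large enough that $|S|^k\le N^k/8$ whenever $|S|\le 0.99N$, and let $W=S^k$ be the $k$-fold Cartesian product of the marked set inside $[N^2]^k$; membership $x\in W$ can be decided with $k=O(1)$ queries to $f_S$. In the YES case $|W|=N^k$, and in the NO case $|W|\le N^k/8$, so it suffices to give an $\AM^{\orcl}$ protocol that distinguishes $|W|\ge N^k$ from $|W|\le N^k/8$ using only membership queries to $W$.

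On input $1^n$, Arthur runs the standard hashing protocol on the universe $U=[N^2]^k=[2^{2nk}]$ with range $R=N^k=2^{nk}$ (already a power of two, so no rounding is needed): he draws a hash function $h\colon U\to[R]$ uniformly from a pairwise-independent family together with a uniformly random target $y\in[R]$ and sends $(h,y)$ to Merlin; Merlin replies with some $x\in U$; Arthur accepts iff $h(x)=y$ and $x\in W$, checking the latter with $k$ oracle queries to $f_S$. Pairwise independence and a Bonferroni inequality give
\[
\frac{|W|}{R}\Bigl(1-\frac{|W|}{2R}\Bigr)\ \le\ \Pr_{h,y}\bigl[\exists\,x\in W:\ h(x)=y\bigr]\ \le\ \frac{|W|}{R},
\]
which for $R=N^k$ is at least $1/2$ in the YES case and at most $1/8$ in the NO case. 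Since $\log|U|=2nk=\poly(n)$, a pairwise-independent family on $U$ (for instance affine hashing over a field of size $2^{2nk}$, followed by truncation to $nk$ bits) has $\poly(n)$-length descriptions, so both Arthur's and Merlin's messages are polynomially long and Arthur runs in polynomial time with $O(1)$ oracle queries; this is therefore a legal two-round $\AM^{\orcl}$ protocol. Running $\poly(n)$ independent copies of the round in parallel and accepting iff at least, say, a $5/16$ fraction of the copies accept---which keeps the protocol two-round, hence still in $\AM$, and whose soundness against a possibly-correlating Merlin is the standard $\AM$-amplification argument---boosts completeness to $\ge 2/3$ and soundness to $\le 1/3$. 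The finitely many small lengths $n$ for which the promise is degenerate can be decided correctly by hard-wiring, which does not affect membership in $\AM^{\orcl}$.

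The one genuinely nontrivial point, which I would flag as the main obstacle, is that the promise provides only a constant-factor gap between the YES and NO set sizes, whereas the set lower bound protocol natively separates sizes differing by a factor of two; the product-set trick $W=S^k$ for a constant $k$ is exactly what turns the $1/0.99$ gap into a factor-$8$ (or larger) gap, after which the remainder is the textbook Goldwasser--Sipser analysis. A secondary technical wrinkle is that the universe $[N^2]^k$ has size doubly exponential in $n$, so one must use a pairwise-independent hash family whose \emph{description} length is polynomial in $\log$ of the universe size rather than in the universe size itself; affine hashing over $\mathbb{F}_{2^{2nk}}$ has precisely this property.
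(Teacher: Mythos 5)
Your proposal is correct and takes essentially the same approach as the paper: the paper disposes of this lemma by citing the Goldwasser--Sipser set lower bound protocol \cite{GS86}, and your argument is exactly that protocol, with the standard constant-power trick ($W=S^k$) to turn the $1/0.99$ gap into a factor-$8$ gap and standard parallel repetition for amplification. You have simply written out in full the proof the paper delegates to the citation.
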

\noindent Lemma \ref{lemma:gapped} is proven by Goldwasser and Sipser in \cite{GS86}.

\begin{lemma}\label{lemm:ACnotinQCMA}
For any subset-gapped $\orcl$, the language $L_\orcl$ that contains those strings $1^n$ such that 
$\sop O_n=\sop F_S$ with $|S|=N$, is not in $\QCMA^{\orcl}$.
\end{lemma}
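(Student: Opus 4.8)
The plan is to verify that the \emph{Subset Size Checking} oracle (with sign-oracle access $\sop F_S$) together with the language $L_\orcl$ satisfies the four Conditions of Section~\ref{sec:conditions}, and then invoke Theorem~\ref{thm:notinQCMA}. First I would identify the ingredients of Condition~\ref{cond:lang}: take $p_1(n)=2n$, $p_2(n)=2n$, let $\Ugroup{n}_X=\{\sop F_S : |S|=N\}$ and $\Ugroup{n}_Y=\{\sop F_S : |S|=.99N\}$, and let $\SU(\sop F_S)=S$. These are trivially disjoint because the subsets have different sizes, so $\SA=\mathbf{C}(N^2,N)$ and $\SB=\mathbf{C}(N^2,.99N)$. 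Condition~\ref{cond:witness} is the point where one must be slightly careful: the oracle $\sop F_S$ is literally \emph{defined} by its subset $S$, so the optimal witness for $1^n$ can be taken to depend only on $S=\SU(\sop O_n)$, since the verifier's acceptance probability on input $1^n$ with witness $w$ and oracle $\sop F_S$ is a function of $(w,S)$ alone; picking for each $S$ the $w$ maximizing this gives a witness assignment depending only on $S$. So Conditions~\ref{cond:lang} and~\ref{cond:witness} are essentially immediate.

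The real work is Conditions~\ref{cond:distributed} and~\ref{cond:adv}. For Condition~\ref{cond:distributed} I would use the \emph{Fixing Procedure} alluded to in the introduction: given any $\mathbf{S}\subseteq\mathbf{C}(N^2,N)$ with $|\mathbf{S}|\ge |\mathbf{C}(N^2,N)|\,2^{-p(n)}$, I want to extract $\mathbf{S}'\subseteq\mathbf{S}$ together with a fixed set $S_{\textrm{fixed}}$ contained in every member of $\mathbf{S}'$, such that outside $S_{\textrm{fixed}}$ no element is too popular (appears in at most a $2^{-\alpha}$ fraction), and such that $S_{\textrm{fixed}}$ extends to a set in $\SB$ (trivial here: $|S_{\textrm{fixed}}|\le N$ but we need room for a set of size $.99N$; since we will choose $S_{\textrm{fixed}}$ much smaller than $.99N$ this is automatic as long as $|S_{\textrm{fixed}}|\le .99N$). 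The procedure is iterative: while some element $i$ outside the current $S_{\textrm{fixed}}$ appears in more than a $2^{-\alpha}$ fraction of the surviving family, add $i$ to $S_{\textrm{fixed}}$ and restrict the family to those sets containing $i$. Each step shrinks the family by at most a $2^{-\alpha}$ factor; I would bound the number of steps (hence $|S_{\textrm{fixed}}|$) by noting the family cannot shrink below one set, giving roughly $|S_{\textrm{fixed}}|\le p(n)/\alpha + O(1)$, which for fixed $\alpha$ and $n$ large is $o(N)$, so indeed $|S_{\textrm{fixed}}|\le .99N$ and $S_{\textrm{fixed}}$ extends to a size-$.99N$ set. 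This gives an $(\alpha,\SB)$-distributed $\mathbf{S}'$; the counting/combinatorial estimate that each fixing step loses only a $2^{-\alpha}$ fraction is the crux and needs the hypothesis $0<\alpha<1/2$ only to keep things comfortable.

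For Condition~\ref{cond:adv} I would apply Ambainis's adversary bound (Lemma~\ref{lemm:ambainis}, or rather its standard analogue for the sign oracle $\sop F_S$ — note $\sop F_S$ is self-inverse, so the ordinary Ambainis bound applies directly and we do not even need the in-place version here). Given an $(\delta,\SB)$-distributed $\mathbf{S}\subseteq\SA$ with fixed core $S_{\textrm{fixed}}$, I set $\pmb{\sigma}_X$ to correspond to (a large subfamily of) $\mathbf{S}$, and $\pmb{\sigma}_Y$ to sets of size $.99N$ obtained by deleting $0.01N$ elements from members of $\mathbf{S}$ outside $S_{\textrm{fixed}}$, and take the relation $R$ to pair an $X$-set with the $Y$-sets obtained from it this way (and vice versa). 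The $(\delta,\SB)$-distributed property controls $l_{max}$: because each element outside $S_{\textrm{fixed}}$ sits in at most a $2^{-\delta}$ fraction of the family, the number of partners differing in any fixed coordinate $i$ is bounded, yielding $l_{x,i}l_{y,i}\le (\textrm{const})\cdot |\pmb{\sigma}_Y|\,|\pmb{\sigma}_X|\,2^{-\delta}/ (0.01 N)^2$ or similar, while $m,m'$ are of order $|\pmb{\sigma}_Y|, |\pmb{\sigma}_X|$ times the binomial factors; plugging into $\sqrt{mm'/l_{max}}$ gives a query lower bound of order $N^{\delta/2}$ up to constants, which after adjusting $\delta$ by a constant factor (absorbing the $0.01$'s) gives the stated $\left(1-2\sqrt{\epsilon(1-\epsilon)}\right)N^{\delta/2}$ bound. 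The main obstacle I anticipate is the bookkeeping in this last step: choosing the relation $R$ so that it is simultaneously (a) left- and right-regular with large $m,m'$, and (b) has every coordinate-restricted degree small enough to exploit the $2^{-\delta}$ popularity bound — the distributedness gives (b) but one must be careful that restricting to a regular subfamily for (a) does not destroy (b), and that the YES and NO families stay genuinely size-$N$ and size-$.99N$ respectively. Once all four conditions are in hand, Lemma~\ref{lemm:ACnotinQCMA} follows immediately from Theorem~\ref{thm:notinQCMA}.
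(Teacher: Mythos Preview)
Your high-level plan matches the paper exactly: verify Conditions~\ref{cond:lang}--\ref{cond:adv} and invoke Theorem~\ref{thm:notinQCMA}; Conditions~\ref{cond:lang} and~\ref{cond:witness} are immediate for the reasons you give. But there is a genuine gap in your treatment of Conditions~\ref{cond:distributed} and~\ref{cond:adv}, and it is not just bookkeeping.

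First, your bound $|S_{\textrm{fixed}}|\le p(n)/\alpha+O(1)$ does not follow from ``the family cannot shrink below one set.'' Each fixing step keeps \emph{at least} a threshold fraction of the family, so after $k$ steps the surviving family has size $\ge |\mathbf S|\cdot(\text{threshold})^{k}$; the constraint ``size $\ge 1$'' then gives only $k\le \log|\mathbf S|/\log(\text{threshold}^{-1})$, which is of order $N\log N$ (or $N$), never $p(n)/\alpha$. The paper's argument (Lemma~\ref{lemm:fixing_subset}) is instead a two-sided comparison: if $.5N$ elements had been fixed and some unfixed element were still too popular, the family would be simultaneously at most $\binom{N^2}{.5N}N^{\alpha}=2^{O(N)+(N/2)\log N}$ (a combinatorial upper bound once so many coordinates are pinned) and at least $\binom{N^2}{N}2^{-p(n)}N^{-\alpha N/2}=2^{-O(N)+N\log N(1-\alpha/2)}$ (the fixing lower bound), which contradict each other for $\alpha<1$ and $n$ large. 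This double-counting is the missing idea.

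Second, and relatedly, the popularity threshold must be $N^{-\alpha}$, not the constant $2^{-\alpha}$ you use (the paper's Definition~\ref{def:distr} is unfortunately worded, but all of its proofs use $N^{-\alpha}$). This is essential for Condition~\ref{cond:adv}: with a constant threshold the adversary bound yields $\sqrt{mm'/l_{\max}}=O(1)$ and no super-polynomial query lower bound. The paper also takes a much simpler relation than your deletion relation, namely the complete bipartite $R=\mathbf S_X\times\mathbf S_Y$ with $\mathbf S_Y=\{S\in\mathbf C(N^2,.99N):S_{\textrm{fixed}}\subset S\}$; then $m=|\mathbf S_Y|$, $m'=|\mathbf S_X|$, and the $N^{-\delta}$-distributedness directly gives $l_{y,j}\le N^{-\delta}|\mathbf S_X|$, yielding $\sqrt{mm'/l_{\max}}\ge N^{\delta/2}$. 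Your deletion relation $S_y\subset S_x$ has exactly the problem you anticipate: for a fixed $S_y$ there may be few or no $S_x\in\mathbf S_X$ containing it, so $m'$ is uncontrolled.
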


To prove this lemma, we follow Recipe \ref{Recipe}. We address step 2 of the recipe in Lemma \ref{lemm:fixing_subset}:

\begin{restatable}{lemma}{fixingsubset}\label{lemm:fixing_subset}
Let $0<\alpha<1/2$ be a constant and $p(\cdot)$ be a polynomial
function.  Then there exists a positive integer $n^*(p,\alpha)$,
 such that for every positive integer $n>n^*(p,\alpha)$, and every subset family
$\mathbf{S}\subseteq\mathbf{C}(N^2,N)$ such that
$|\mathbf S|\geq|\mathbf{C}(N^2,N)|2^{-p(n)}$, there exists
a subset family $\mathbf{S}_X\subseteq \mathbf{S}$ such that $\mathbf{S}_X$ is 
$\alpha$-distributed with $|S_\textrm{fixed}|<.5N$. 
\end{restatable}
\noindent (Since $|S_\textrm{fixed}|<.5N$, this implies $|\{S:S\in \SB, S_\textrm{fixed}\subset S\}|$ is large, as desired.)

\noindent{\it Proof Sketch:} (Full proof in Appendix \ref{app:perm_proofs_subset}.)
We prove the existence of $\mathbf{S}_X$ by construction. Let
$\mathbf{S}$ be any subset of $\mathbf{C}(N^2,N)$ with
$|\mathbf{S}|\geq|\mathbf{C}(N^2,N)|2^{-p(n)}$. We construct
$\mathbf{S}_X$ using the Fixing Procedure:

\vspace{.3cm}
\fbox {
\parbox{.9\linewidth}{
\noindent {\bf{Fixing Procedure}}
\begin{enumerate}
\item Set $\mathbf{S}_X=\mathbf{S}$, and set $S_{\textrm{fixed}}=\emptyset$.
\item
\begin{enumerate}
\item Let $\nu(i)$ be the number of subsets $S\in {\mathbf{S}_X}$ such that
$i\in S$. 
\item If there
exists some element $i$ for which $ |{\mathbf{S}_X}|>\nu(i)\geq |{\mathbf{S}_X}|N^{-\alpha},$
 set $\mathbf{S}'\leftarrow\{S:S\in {\mathbf{S}_X}\text{ and } i\in S\},$ set 
 $S_\textrm{fixed}\leftarrow S_\textrm{fixed}\cup i,$ and return to step $2$(a).
 Otherwise exit the Fixing Procedure. 
\end{enumerate} 
\end{enumerate}
}
}
\vspace{.3cm}

By construction, the Fixing Procedure returns a set that is $\alpha$-distributed 
(see Definition \ref{def:distr}), so we only need to ensure that not too many elements are fixed. We obtain a lower bound on the final size of $\mathbf{S}_X$ because
each time an element is fixed, the size of the set decreases by at
most $N^{-\alpha}$. On the other
hand, because $\mathbf{S}_X$ is contained in $\mathbf{C}(N^2,N)$,
if a certain number of items are fixed, we have an upper bound on the
size of $\mathbf{S}_X$ using the structure of $\mathbf{C}(N^2,N)$
and a combinatorial argument. We show that if more than $.5N$ items
are fixed, these upper and lower bounds contradict each other, proving
that less than $.5N$ items must be fixed before the Fixing Procedure 
terminates.

We address Step 3 of Recipe \ref{Recipe} with the following Lemma:
\begin{restatable}{lemma}{conditionssubset}\label{lemm:X_conditions_subset}
Suppose $\mathbf{S}_X\subset \mathbf{C}(N^2,N)$ is the $\alpha$-distributed subset created using the Fixing Procedure of Lemma \ref{lemm:fixing_subset}, with fixed subset $S_{\textrm{fixed}}$. Let $\mathbf{S}_Y=\{S:S\in \mathbf{C}(N^2,0.99N), S_\textrm{fixed}\subset S\}$. Then we can construct an adversary bound to prove that 
for every quantum algorithm $G$, there exists $S_x\in\mathbf S_X$, and 
$S_y\in \mathbf{S}_Y$, (that depend on $G$) such that, given oracle access to $\sop F_{S_x}$ 
or $\sop F_{S_y}$, $G$ can not distinguish 
them with probability $\epsilon>.5$ without using $\left(1-2\sqrt{\epsilon(1-\epsilon)}\right)N^{\alpha/2}$ queries.
\end{restatable}

\noindent{\it Proof Sketch:} (Full proof in Appendix \ref{app:perm_proofs_subset}.)
We use Theorem 6 from \cite{A00}. This result is identical to our Lemma \ref{lemm:ambainis},
except with standard oracles rather than permutation oracles.

We let $\mathbf{R}= \mathbf{S}_X\times\mathbf{S}_Y$.
To apply Theorem 6, we need to show that for elements $i$ such that
$i\in S_x$ but $i\notin S_y$ for $(S_x,S_y)\in\mathbf{R}$ that either (1) $S_x$
is not connected to many other sets $S_y$ where $i\notin S_y$ or (2)
$S_y$ is not connected to many other sets $S_x$ where $i\in S_x$. We use
the $\alpha$-distributed property of $\mathbf{S}_X$ to show that property (2)
holds. We show a similar result for the case $i\notin S_x$ but $i\in S_y$ for $(S_x,S_y)\in\mathbf{R}$.

\section{Oracle Separation of $\QMA$ and $\QCMA$}\label{sec:QCMA}

In this section, we prove an oracle separation between $\QMA$ and $\QCMA$. 
In particular, we show:
\begin{theorem}\label{thm:main}
There exists a {\rpc} oracle $\orcl$, and a language $L_\orcl$ which contains those unary strings $1^n$ where
$\mathscr O_n=\mathscr P_{\pmb{\sigma}_\emph{pre}(S)}$ with $S\in\pmb{S}^n\even$
such that $L_\orcl\notin \QCMA^\orcl$.
\end{theorem}
\noindent Combined with Theorem \ref{thm:qma}, this gives the desired separation between $\QMA$ and $\QCMA$.

Really, we would like to prove a different result, one that involves preimage-correct oracles:
\begin{definition}[{\pc} oracles]
Let $\orcl$ be a countably infinite set of unitaries: 
$\orcl=\{\sop O_1,\sop O_2,\dots\}$, 
where each $\sop O_n$ implements an $(2n)$-qubit unitary. 
We say that $\orcl$ is {\pc}
if for every $n$, $\sop O_n=\sop P_{\sigma}$, for some
$\sigma$ such that $\sperm(\sigma)\in\mathbf{S}^n\even\cup\mathbf{S}^n\odd$.
\end{definition}
\noindent The definition of {\pc} oracles is very similar to that of
{\rpc} oracles in Definition \ref{def:rpc}, except there is no randomness
in {\pc} oracles -- they are unitaries. 
In fact, 
we believe:
\begin{conjecture}\label{conj:nonRandom}
There exists a {\pc} oracle $\orcl$, and a language $L_\orcl$
 which contains those unary strings $1^n$ where
$\mathscr O_n= P_{\sigma}$ with $S_{\textrm{pre}}(\sigma)\in\pmb{S}^n\even$,
such that $L_\orcl\notin \QCMA^\orcl$.
\end{conjecture}

Theorem \ref{thm:qma} applies equally well whether the oracle is
{\pc} or {\rpc}. So why is it harder to prove Conjecture \ref{conj:nonRandom}
than Theorem \ref{thm:main}? The answer is that Recipe \ref{Recipe} is much easier
to use if the optimal witness depends only on a subset of integers. Note {\rpc} oracles have a one-to-one relationship with a subset of integers, and so the optimal witness only depends on that subset. However for {\pc} oracles, the optimal witness might depend on some details of the permutation, which is more challenging to handle.

For convenience, we define the complexity class $\expQCMA$ to be
the analogue of $\QCMA$, in which the quantum verifier is allowed
exponential time and space, but receives a polynomial length
classical witness. While trivially bounded-error quantum exponential time,
$\BQEXP=\expQCMA$, in general the query complexity of a $\expQCMA$ machine
 is not the same
as the query complexity of a $\BQEXP$ machine.

Our proof works as follows.  We first show that if there is a $\QCMA$
machine that decides $L_\orcl$, for all {\rpc}
oracles $\orcl$, then there will be a $\expQCMA$ machine that decides $L_{\widetilde{\orcl}}$ for any {\pc} oracle $\widetilde{\orcl}$, where, {\it crucially}, the optimal witness only depends on the pre-image subset of the permutation implemented by the oracle. Then using 
Recipe \ref{Recipe}, we
show that there is a language $L_{\widetilde{\orcl}}$ for a {\pc} oracle $\widetilde{\orcl}$ such that no $\expQCMA$ machine that can decide the language using an efficient number of queries to $\widetilde{\orcl}$ (with a witness that only depends on the pre-image subset). This implies that there is no
$\QCMA$ machine that solves the {\rpc} oracle problem.

We first prove the reduction from deciding languages on pre-image correct oracles to languages on randomized pre-image correct oracles. 
\begin{restatable}{lemma}{rand}
\label{lemma:random_to_fixed}
Given a {\rpc} oracle $\orcl$, let $1^n\in L_\orcl$ if $\mathscr O_n=\mathscr
P_{\spi(S)}$ with $S\in\pmb{S}^n\even$. Given a {\pc} oracle $\tilde{\orcl}$
let $1^n\in L_{\widetilde{\orcl}}$ if $\sop O_n=\sop P_{\sigma}$ with
$\sperm(\sigma)\in\mathbf{S}^n\even$. Then if there is a $\QCMA$ machine $M$
that decides $L_\orcl$ for every {\rpc} $\orcl$, then there is a $\expQCMA$
machine $\widetilde{M}$ that decides $L_{\widetilde \orcl}$ for every {\pc} ${\widetilde
\orcl}$ such that $\widetilde{M}$ uses at most a polynomial number of queries to
$\widetilde \orcl$, and on input $1^n$ takes as input a classical witness $w$ that
depends only on $\sperm(\sigma)$.
\end{restatable}

\noindent{\it Proof Sketch:} (Full proof in Appendix
\ref{app:rand}.) Given a permutation $\sigma$,
we can obtain all permutations $\sigma'$ such that
$\sperm(\sigma')=\sperm(\sigma)$ by first applying $\sigma$, and then permuting the
first $N$ elements and the last $N^2-N$ elements separately. Consider a controlled-unitary that, if system $A$ is in state
$\ket{i}$, implements the $i\tth$ in-place
 permutation of the first $N$ and last $N^2-N$ elements  on
system $B.$ If we start with system $A$ in an equal superposition,
apply $\sop P_\sigma$ to $B$, apply the control to $A$ and $B$, and
then trace out system $A$, the result is $\mathscr P_{\sperm(\sigma)}$ 
on system  $A$. Thus, given  any 
{\pc} oracle $\sop P_\sigma$, we can simulate the 
{\rpc} oracle $\mathscr P_{\sperm(\sigma)}.$

Using this simulation trick, we can create an algorithm $\tilde{M}$ using 
a {\pc} oracle that has the same outcomes as any algorithm $M$
using a {\rpc} oracle, which uses the oracle the same
number of times, and has a witness that only depends on
the preimage subset. However, we do not believe the control permutation
can be implemented efficiently, and that is why we must consider
the class $\expQCMA$.

\begin{lemma}\label{lemma:no_lang}
There exists a {\pc} $\orcl$ such that there
is no $\expQCMA^\orcl$ machine  $M$ that decides $L_\orcl$ using a polynomial number of queries,
 where the classical witness on input $1^n$
depends only on $\sperm(\sigma)$, when $\sop O_n=\sop P_{\sigma}$.
\end{lemma}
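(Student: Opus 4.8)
The plan is to invoke the general machinery of Theorem \ref{thm:notinQCMA} applied to the $\expQCMA$ version of the class, with the role of the oracle family played by {\pc} oracles $\sop P_\sigma$. Concretely, I would set $\Ugroup{n} = \{\sop P_\sigma : \sperm(\sigma)\in\mathbf{S}^n\even\cup\mathbf{S}^n\odd\}$, with $\Ugroup{n}_X = \{\sop P_\sigma : \sperm(\sigma)\in\mathbf{S}^n\even\}$ and $\Ugroup{n}_Y = \{\sop P_\sigma : \sperm(\sigma)\in\mathbf{S}^n\odd\}$, and associate to each unitary $\sop P_\sigma$ the subset $\SU(\sop P_\sigma)=\sperm(\sigma)$. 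Then $\SA = \mathbf{S}^n\even$ and $\SB=\mathbf{S}^n\odd$ are disjoint, and $1^n\in L_{\orcl}$ iff $\sop O_n\in\Ugroup{n}_X$, so Condition \ref{cond:lang} holds. Condition \ref{cond:witness} is exactly the hypothesis of the lemma: we restrict attention to machines whose classical witness on $1^n$ depends only on $\sperm(\sigma)$. One caveat is that the diagonalization in Theorem \ref{thm:notinQCMA} enumerates $\poly(n)$-size quantum verifiers; here I would instead enumerate $\expQCMA$ machines making a fixed polynomial number of queries — the enumeration and counting arguments go through identically since what matters is the polynomial witness length and the polynomial query bound, not the running time.

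The two substantive conditions that must be re-established in the permutation setting are Conditions \ref{cond:distributed} and \ref{cond:adv}. For Condition \ref{cond:distributed}, I would prove the analogue of Lemma \ref{lemm:fixing_subset}: given any $\mathbf{S}\subset\mathbf{S}^n\even$ with $|\mathbf{S}|\ge|\mathbf{S}^n\even|2^{-p(n)}$, run the Fixing Procedure verbatim to extract an $(\alpha,\mathbf{S}^n\odd)$-distributed subset $\mathbf{S}'$. The only point requiring care is point (2) of Definition \ref{def:distr}, namely that $S_{\textrm{fixed}}$ can be completed to an element of $\mathbf{S}^n\odd$: since $|\mathbf{S}^n\even|=\binom{N^2}{N}\cdot(\text{const})$ up to the even/odd-count constraint, and the constraint that $\tfrac23 N$ elements be even (resp. odd) still allows $S_{\textrm{fixed}}$ of size $\le N/2$ to be extended in both directions, the same counting contradiction as in \eq{T_upperbound} versus \eq{S_lowerbound} applies — one just needs to check that restricting to $\mathbf{S}^n\odd$ rather than $\mathbf{C}(N^2,.99N)$ changes the binomial bounds only by polynomial-in-$N$ factors in the exponent's lower-order term, which is absorbed.

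For Condition \ref{cond:adv}, the key difference from Lemma \ref{lemm:X_conditions_subset} is that the oracle is now an in-place permutation unitary with no inverse access, so I would apply Lemma \ref{lemm:ambainis} (the in-place adversary bound) rather than Ambainis's standard-oracle Theorem 6. Given an $(\delta,\mathbf{S}^n\odd)$-distributed $\mathbf{S}_X$, I would take the relation $R$ pairing each $\sigma_x$ with $\sperm(\sigma_x)\in\mathbf{S}_X$ to each $\sigma_y$ with $\sperm(\sigma_y)\supset S_{\textrm{fixed}}$ lying in $\mathbf{S}^n\odd$ — but one must also fix how $\sigma_x,\sigma_y$ act pointwise, not just their preimage sets, so I would additionally require $\sigma_x$ and $\sigma_y$ to agree outside the symmetric difference of their preimage sets (and agree on the common mapping into $[N]$). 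Computing $l_{x,i}l_{y,i}$: for an index $i$ where $\sigma_x$ and $\sigma_y$ can differ, the distributed property bounds the number of partners containing $i$ by an $N^{-\delta}$ fraction, giving $\sqrt{mm'/l_{\max}}\ge N^{\delta/2}$ exactly as before. The main obstacle — and the step I'd expect to need the most care — is precisely this pointwise bookkeeping: unlike the phase-oracle case where $\sop F_S$ is determined by $S$ alone, an in-place permutation with preimage set $S$ has many extensions, and one must set up $R$ so that (i) the $m, m'$ lower bounds remain large (polynomially or better), (ii) the $l_{\max}$ bound genuinely sees the $N^{-\delta}$ savings, and (iii) $R$ is nonempty, which again reduces to the completability of $S_{\textrm{fixed}}$ guaranteed by the distributed property. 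Once Condition \ref{cond:adv} is in hand, Theorem \ref{thm:notinQCMA} (in its $\expQCMA$-with-bounded-queries form) immediately yields a {\pc} oracle $\orcl$ for which no query-efficient $\expQCMA$ machine with subset-dependent witness decides $L_\orcl$.
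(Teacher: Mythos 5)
Your overall route is the same as the paper's: apply Theorem \ref{thm:notinQCMA} (observing that it is really a query-complexity statement and hence valid for $\expQCMA$), associate $\SU(\sop P_\sigma)=\sperm(\sigma)$ so that Conditions \ref{cond:lang} and \ref{cond:witness} are immediate (the latter by the hypothesis of the lemma, which the paper justifies via Lemma \ref{lemma:random_to_fixed}), and re-establish Conditions \ref{cond:distributed} and \ref{cond:adv} in the permutation setting (the paper's Lemmas \ref{lemm:fixing} and \ref{lemm:X_conditions}) using the Fixing Procedure and the in-place adversary bound of Lemma \ref{lemm:ambainis}. However, two of your details do not go through as written. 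First, for the Fixing Procedure the completability requirement (point (2) of Definition \ref{def:distr}) is a parity-wise constraint, not a total-size constraint: a set in $\mathbf{S}^n\odd$ contains only $N/3$ even elements, so what must be shown is that at most $N/3$ \emph{even} elements ever get fixed (odd fixes are automatically at most $N/3$ because $\mathbf{S}\subset\mathbf{S}\even^n$). Your criterion that ``$S_{\textrm{fixed}}$ of size at most $N/2$ can be extended in both directions'' is false --- fixing $N/2$ even elements and no odd ones cannot be completed to an element of $\mathbf{S}^n\odd$ --- so the counting contradiction must be run at the moment $N/3$ even elements have been fixed, with the binomial counts split by parity, as in Lemma \ref{lemm:fixing}. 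This is repairable, but not by the bound you state.

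Second, and more substantively, your relation $R$ for Condition \ref{cond:adv} --- pairing $\sigma_x$ with every $\sigma_y$ that agrees with it on $S_x\cap S_y$ and outside $S_x\cup S_y$ --- is many-to-many with highly non-uniform multiplicities: for fixed $\sigma_x$ and $S_y$ there are $\left(|S_x\setminus S_y|!\right)^2$ such partners, a weight that varies enormously with $|S_x\cap S_y|$. Consequently $l_{y,j}/m'$ becomes a \emph{weighted} fraction of the sets $S_{x'}\in\mathbf{S}_X$ containing $j$, and the $(\delta,\mathbf{S}^n\odd)$-distributed property, which bounds only the unweighted fraction by $N^{-\delta}$, does not directly give $l_{y,j}\leq N^{-\delta}m'$; this is exactly the ``pointwise bookkeeping'' you flag as delicate but do not resolve. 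The paper's resolution (Lemma \ref{lemm:X_conditions}) is to make $R$ a one-to-one matching between $\spi(S_x)$ and $\spi(S_y)$ for each pair of sets: choose a base pair $(\sigma_x^*,\sigma_y^*)$ agreeing on $S_x\cap S_y$ and off $S_x\cup S_y$ and swapping images pairwise across the symmetric difference, then declare $(\tau\circ\sigma_x^*,\tau\circ\sigma_y^*)\in R$ for every $\tau\in\pmb{\tau}^n$. Because each $\sigma_y$ then has exactly one partner per set $S_{x'}$, the $l$-counts reduce to counting sets, the distributed property applies verbatim, and one gets $m=|\mathbf{S}_Y|$, $m'=|\mathbf{S}_X|$ and $l_{x,j}l_{y,j}\leq|\mathbf{S}_X||\mathbf{S}_Y|N^{-\delta}$ just as in the phase-oracle case. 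Without this matching (or some argument controlling the weights in your version of $R$), your verification of Condition \ref{cond:adv} has a genuine gap.
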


Note that Lemma \ref{lemma:no_lang}, combined with the contrapositive
of Lemma \ref{lemma:random_to_fixed}, proves Theorem \ref{thm:main}.

To prove Lemma \ref{lemma:no_lang}, we use Recipe \ref{Recipe}. Even though we do not have a true subset-based oracle (the function $g(\sop P_{\sigma})=\sperm(\sigma)$ is not injective), using the constraint that the classical witness depends only on $\sperm(\sigma)$, we can apply the recipe. 

Additionally, while Recipe \ref{Recipe} refers to the 
class $\QCMA$, because we are only making a statement about query complexity (and say nothing about space or time complexity), 
the approach also applies to the query complexity
of $\expQCMA.$ 

We prove steps 2 and 3 of Recipe \ref{Recipe} in
Lemmas \ref{lemm:fixing} and \ref{lemm:X_conditions}. These proofs
are quite similar to the proofs of Lemmas \ref{lemm:fixing_subset}
and \ref{lemm:X_conditions_subset}; the full proofs can be found in
Appendix \ref{app:perm_proofs}.

\begin{restatable}{lemma}{fixing}
\label{lemm:fixing}
Let $0<\alpha<1/2$ be a constant and $p(\cdot)$ be a polynomial
function.  Then there exists a positive integer $n^*(p,\alpha)$,
 such that for every $n>n^*(p,\alpha)$, and every subset family
$\mathbf{S}\subset\mathbf{S}_\emph{even}^n$ such that
$|\mathbf{S}|\geq|\mathbf{S}^n_\emph{even}|2^{-p(n)}$, there exists
a subset family $\mathbf{S}_X\subset \mathbf{S}$ such that $\mathbf{S}_X$ is $\alpha$-distributed. Furthermore the fixed subset $S_\textrm{fixed}$ of $\mathbf{S}_X$ contains at most $N/3$ even elements.
\end{restatable}

\begin{restatable}{lemma}{Xconditions}
\label{lemm:X_conditions}
Let $\mathbf{S}_X$ be the $\alpha$-distributed set created using the Fixing Procedure from Lemma \ref{lemm:fixing}, with fixed subset $S_\textrm{fixed}$. Let $\mathbf{S}_Y=\{S:S\in \mathbf{S}^n\odd, S_\textrm{fixed}\subset S \}$.
Then we can construct an adversary bound to prove that 
for every quantum algorithm $G$, there exists permutations 
$\sigma_x,\sigma_y\in \pmb{\sigma}^n$ with $\sperm(\sigma_x)\in\mathbf S_X$ and
$\sperm(\sigma_y)\in \mathbf{S}_Y$, (that depend on $G$) such that, given oracle access to $\sop P_{\sigma_x}$ 
or $\sop P_{\sigma_y}$, $G$ can not distinguish 
them with probability $\epsilon>.5$ without using $\left(1-2\sqrt{\epsilon(1-\epsilon)}\right)N^{\alpha/2}$ queries.
\end{restatable}

The proof strategy of Lemma \ref{lemm:fixing} (like Lemma
\ref{lemm:fixing_subset}) involves a Fixing Procedure. However, the
details are slightly more complex because we must deal with fixing even and odd
elements.

The proof strategy of Lemma \ref{lemm:X_conditions} is similar
 to Lemma \ref{lemm:X_conditions_subset}, except we use a more
complex relation $\mathbf{R}$ for the adversary bound.
The challenge is that for two similar subsets $S_x$ and $S_y$, 
there exist permutations
$\sigma_x$ and $\sigma_y$ that are extremely dissimilar but for which
$\sperm(\sigma_x)=S_x$ and $\sperm(\sigma_y)=S_y$. We want to create a
relationship $\mathbf{R}$ that connects similar permutations, while only having information
about the structure of the related subsets. To address this problem, we
note that for any two subsets $S_x$ and $S_y$, we can create a one-to-one matching between the elements of $\spi(S_x)$ and the
elements of $\spi(S_y)$ such that each permutation is matched
with a similar permutation. Using this one-to-one matching, we 
 create a relationship $\mathbf{R}$ between permutations that inherits
the properties of the related subsets.

As an immediate corollary of Theorem \ref{thm:qma} and Theorem \ref{thm:main},
there exists a {\rpc} oracle $\orcl$ and language $L_{\orcl}$ such that
$L\notin\QCMA^\orcl$ but $L\in\QMA^\orcl$, and so $\QMA^\orcl\not\subseteq\QCMA^\orcl$.



\section{Acknowledgments}\label{sec:ack}

We are grateful for multiple discussions with Stephen
Jordan regarding permutation oracle verification strategies. 
We appreciate the many people who discussed
this project with us, including Scott Aaronson, David Gosset,
Gus Gutoski, Yi-Kai Liu, Ronald de Wolf, Robin Kothari, Dvir Kafri,
and Chris Umans. BF and SK completed much of this work while at the Joint Center for Quantum Information and Computer Science (QuICS), University of Maryland.

\bibliography{perm_adv}

\begin{thebibliography}{10}

\bibitem{complexityzoo}
Complexity zoo.
\newblock \url{https://complexityzoo.uwaterloo.ca/Complexity_Zoo}.

\bibitem{collision}
Scott Aaronson.
\newblock Quantum lower bound for the collision problem.
\newblock In John~H. Reif, editor, {\em Proceedings on 34th Annual {ACM}
  Symposium on Theory of Computing, May 19-21, 2002, Montr{\'{e}}al,
  Qu{\'{e}}bec, Canada}, pages 635--642. {ACM}, 2002.
\newblock URL: \url{http://doi.acm.org/10.1145/509907.509999}, \href
  {http://dx.doi.org/10.1145/509907.509999} {\path{doi:10.1145/509907.509999}}.

\bibitem{AK07}
Scott Aaronson and Greg Kuperberg.
\newblock Quantum versus classical proofs and advice.
\newblock In {\em Computational Complexity, 2007. CCC'07. Twenty-Second Annual
  IEEE Conference on}, pages 115--128. IEEE, 2007.

\bibitem{AS2004}
Scott Aaronson and Yaoyun Shi.
\newblock Quantum lower bounds for the collision and the element distinctness
  problems.
\newblock {\em J. ACM}, 51(4):595--605, July 2004.
\newblock URL: \url{http://doi.acm.org/10.1145/1008731.1008735}, \href
  {http://dx.doi.org/10.1145/1008731.1008735}
  {\path{doi:10.1145/1008731.1008735}}.

\bibitem{AN02}
Dorit Aharonov and Tomer Naveh.
\newblock Quantum {NP} -- a survey.
\newblock {\em arXiv preprint quant-ph/0210077}, 2002.

\bibitem{A00}
Andris Ambainis.
\newblock Quantum lower bounds by quantum arguments.
\newblock In {\em Proceedings of the thirty-second annual ACM symposium on
  Theory of computing}, pages 636--643. ACM, 2000.

\bibitem{AMR+11}
Andris Ambainis, Lo{\"\i}ck Magnin, Martin Roetteler, and J{\'e}r{\'e}mie
  Roland.
\newblock Symmetry-assisted adversaries for quantum state generation.
\newblock In {\em Computational Complexity (CCC), 2011 IEEE 26th Annual
  Conference on}, pages 167--177. IEEE, 2011.

\bibitem{BGS}
Theodore~P. Baker, John Gill, and Robert Solovay.
\newblock Relativizations of the {P} =? {NP} question.
\newblock {\em {SIAM} J. Comput.}, 4(4):431--442, 1975.
\newblock URL: \url{https://doi.org/10.1137/0204037}, \href
  {http://dx.doi.org/10.1137/0204037} {\path{doi:10.1137/0204037}}.

\bibitem{B15}
Aleksandrs Belovs.
\newblock Variations on quantum adversary.
\newblock {\em arXiv preprint 1504.06943}, 2015.

\bibitem{dBCW02}
J~Niel De~Beaudrap, Richard Cleve, and John Watrous.
\newblock Sharp quantum versus classical query complexity separations.
\newblock {\em Algorithmica}, 34(4):449--461, 2002.

\bibitem{GS86}
Shafi Goldwasser and Michael Sipser.
\newblock Private coins versus public coins in interactive proof systems.
\newblock In Juris Hartmanis, editor, {\em Proceedings of the 18th Annual {ACM}
  Symposium on Theory of Computing, May 28-30, 1986, Berkeley, California,
  {USA}}, pages 59--68. {ACM}, 1986.
\newblock URL: \url{http://doi.acm.org/10.1145/12130.12137}, \href
  {http://dx.doi.org/10.1145/12130.12137} {\path{doi:10.1145/12130.12137}}.

\bibitem{GKS}
Alex~B Grilo, Iordanis Kerenidis, and Jamie Sikora.
\newblock {QMA} with subset state witnesses.
\newblock {\em arXiv preprint arXiv:1410.2882}, 2014.

\bibitem{harrow2014uselessness}
Aram~W Harrow and David~J Rosenbaum.
\newblock Uselessness for an oracle model with internal randomness.
\newblock {\em Quantum Information \& Computation}, 14(7\&8):608--624, 2014.

\bibitem{HLS07}
Peter Hoyer, Troy Lee, and Robert \v{S}palek.
\newblock Negative weights make adversaries stronger.
\newblock In {\em Proceedings of the thirty-ninth annual ACM symposium on
  Theory of computing}, pages 526--535. ACM, 2007.

\bibitem{solovaykitaev}
Alexei~Yu Kitaev.
\newblock Quantum computation: Algorithms and error correction.
\newblock {\em Russian Math. Surveys}, 52(6):1191--1249, 1997.

\bibitem{KSV02}
Alexei~Yu Kitaev, Alexander Shen, and Mikhail~N Vyalyi.
\newblock {\em Classical and quantum computation}.
\newblock Number~47. American Mathematical Soc., 2002.

\bibitem{LMR+11}
Troy Lee, Rajat Mittal, Ben~W Reichardt, Robert \v{S}palek, and Mario Szegedy.
\newblock Quantum query complexity of state conversion.
\newblock In {\em Foundations of Computer Science (FOCS), 2011 IEEE 52nd Annual
  Symposium on}, pages 344--353. IEEE, 2011.

\bibitem{MW05}
Chris Marriott and John Watrous.
\newblock Quantum {Arthur-Merlin} games.
\newblock {\em Comput. Complex.}, 14(2):122--152, June 2005.
\newblock URL: \url{http://dx.doi.org/10.1007/s00037-005-0194-x}, \href
  {http://dx.doi.org/10.1007/s00037-005-0194-x}
  {\path{doi:10.1007/s00037-005-0194-x}}.

\bibitem{regev2008impossibility}
Oded Regev and Liron Schiff.
\newblock Impossibility of a quantum speed-up with a faulty oracle.
\newblock In {\em Automata, Languages and Programming}, pages 773--781.
  Springer, 2008.

\bibitem{V95}
Nikolay~K. {Vereshchagin}.
\newblock {Oracle separation of complexity classes and lower bounds for
  perceptrons solving separation problems}.
\newblock {\em Izvestiya: Mathematics}, 59:1103--1122, December 1995.
\newblock \href {http://dx.doi.org/10.1070/IM1995v059n06ABEH000050}
  {\path{doi:10.1070/IM1995v059n06ABEH000050}}.

\end{thebibliography}
\bibliographystyle{plainurl}

\appendix

\section{An Adversary Bound for Permutation Oracles}\label{sec:ambainis}

We will prove Lemma \ref{lemm:ambainis}:
\amb*
We note that this is identical to Ambainis' adversary bound for permutations
(see Theorem 6 in \cite{A00}).

\begin{proof}
We assume that we have a control permutation oracle, that acts as
\begin{align}
\sop P\ket{x}_C\ket{i}_A\ket{z}_Q=\ket{x}\ket{\sigma_x(i)}\ket{z}
\end{align}
where the Hilbert space $\sop H_C$ has dimension $|\pmb{\sigma}|$,the Hilbert space $\sop H_A$ has dimension
$V$ and is where the permutation is carried out, and $\sop H_Q$ is a set of 
ancilla qubits. 

Let $\ket{\psi^t}$ be the state of the system immediately after $t$
uses of the control oracle. Let $\ket{\varphi^t}$ be the state of the
system immediately before the $t\tth$ use of the control oracle. Let
$\rho^t$ be the reduced state of the system immediately after $t$ uses
of the control oracle, where systems $A$ and $Q$ have been traced out.
That is, $\rho^t=\tr_{AQ}(\ketbra{\psi^t}{\psi^t})$. Let
$\left(\rho^t\right)_{xy}$ be the $(x,y)^\tth$ element of the density
matrix. Then we will track the progress of the following measure:
\begin{align}
W^t=\sum_{(\sigma_x,\sigma_y)\in R}\left|(\rho^t)_{xy}\right|.
\end{align}
Notice that unitaries that only act on
the subsystems $Q$ and $A$ do not affect $W^t.$

If the state before the first use of the oracle is 
\begin{align}
\ket{\psi^0}=\left(\frac{1}{\sqrt{2|\pmb{\sigma}_X|}}\sum_{\sigma_x\in \pmb{\sigma}_X}\ket{x}+
\frac{1}{\sqrt{2|\pmb{\sigma}_Y|}}\sum_{\sigma_y\in \pmb{\sigma}_Y}\ket{y}\right)
\otimes \ket{\phi}_{AQ},
\end{align}
then following Ambainis (e.g. Theorem 2 \cite{A00}), we have that for an algorithm to succeed with probability
at least $1-\epsilon$ after $T$ uses of the oracle, we must have
\begin{align}
W^0-W^T>\left(1-2\sqrt{\epsilon(1-\epsilon)}\right)\sqrt{mm'}
\end{align}

Now we calculate how much $W^t$ can change between uses of the oracle.
Suppose without loss of generality that
\begin{align}\label{eq:psi_star_def}
\ket{\varphi^t}=\frac{1}{\sqrt{2|\pmb{\sigma}_X|}}\sum_{\sigma_x\in \pmb{\sigma}_X}\sum_{i,z}\alpha_{x,i,z}\ket{x,i,z}_{CAQ}
+\frac{1}{\sqrt{2|\pmb{\sigma}_Y|}}\sum_{\sigma_y\in \pmb{\sigma}_Y}\sum_{i,z}\alpha_{y,i,z}\ket{y,i,z}_{CAQ}.
\end{align}
Then we have
\begin{align}
\ket{\psi^{t}}&=\frac{1}{\sqrt{2|\pmb{\sigma}_X|}}\sum_{\sigma_x\in \pmb{\sigma}_X}\sum_{i,z}\alpha_{x,i,z}\ket{x,\sigma_x(i),z}_{CAQ}
+\frac{1}{\sqrt{2|\pmb{\sigma}_Y|}}\sum_{\sigma_y\in \pmb{\sigma}_Y}\sum_{i,z}\alpha_{y,i,z}\ket{y,\sigma_y(i),z}_{CAQ}\nonumber\\
&=\frac{1}{\sqrt{2|\pmb{\sigma}_X|}}\sum_{\sigma_x\in \pmb{\sigma}_X}\sum_{i,z}\alpha_{x,\sigma_x^{-1}(i),z}\ket{x,i,z}_{CAQ}
+\frac{1}{\sqrt{2|\pmb{\sigma}_Y|}}\sum_{\sigma_y\in \pmb{\sigma}_Y}\sum_{i,z}\alpha_{y,\sigma_y^{-1}(i),z}\ket{y,i,z}_{CAQ}.
\end{align}
Hence for $(\sigma_x,\sigma_y)\in R$, we have
\begin{align}\label{eq:rho_calc}
(\rho^t)_{xy}&=\frac{1}{2\sqrt{|\pmb{\sigma}_X||\pmb{\sigma}_Y|}}\sum_{i,z}\alpha_{x,\sigma_x^{-1}(i),z}\alpha^*_{y,\sigma_y^{-1}(i),z}\nonumber\\
(\rho^{t-1})_{xy}&=\frac{1}{2\sqrt{|\pmb{\sigma}_X||\pmb{\sigma}_Y|}}\sum_{i,z}\alpha_{x,i,z}\alpha^*_{y,i,z},
\end{align}
where $(\cdot)^*$ signifies the complex conjugate.
Now we can calculate $W^t-W^{t-1}:$
\begin{align}\label{eq:Wdiff}
W^{t-1}-W^t=&\sum_{(\sigma_x,\sigma_y)\in R}|(\rho^{t-1})_{xy}|-|(\rho^{t})_{xy}|\nonumber\\
\leq&\sum_{(\sigma_x,\sigma_y)\in R}|(\rho^{t-1})_{xy}-(\rho^{t})_{xy}|.
\end{align}

From Eq. \eq{rho_calc}, we see that whenever
$\sigma_x^{-1}(i)=\sigma_y^{-1}(i)$, we have a cancellation between
the corresponding elements of $(\rho^t)_{xy}$ and
$(\rho^{t-1})_{xy}$.  However, when
$\sigma_x^{-1}(i)\neq\sigma_y^{-1}(i)$, terms do not cancel.
To see this more explicitly, we rewrite Eq. \eq{Wdiff} as
\begin{align}\label{eq:sum_cancel}
W^{t-1}-W^t\leq&\frac{1}{2\sqrt{|\pmb{\sigma}_X||\pmb{\sigma}_Y|}}\sum_{(\sigma_x,\sigma_y)\in R}\left|\sum_z\left[\sum_{i:\sigma_x(i)=\sigma_y(i)}\alpha_{x,i,z}\alpha^*_{y,i,z}+
\sum_{i:\sigma_x(i)\neq\sigma_y(i)}\alpha_{x,i,z}\alpha^*_{y,i,z}\right.\right.\nonumber\\
&\left.\left.
-\sum_{i:\sigma_x^{-1}(i)=\sigma_y^{-1}(i)}\alpha_{x,\sigma_x^{-1}(i),z}\alpha^*_{y,\sigma_y^{-1}(i),z}
-\sum_{i:\sigma_x^{-1}(i)\neq\sigma_y^{-1}(i)}\alpha_{x,\sigma_x^{-1}(i),z}\alpha^*_{y,\sigma_y^{-1}(i),z}\right]\right|.
\end{align}
Consider the sets $T_{x,y}=\{i:\sigma_x(i)=\sigma_y(i)\}$ and
$U_{x,y}=\{\sigma_x^{-1}(i):\sigma_{x}^{-1}(i)=\sigma_{y}^{-1}(i)\}$.
We will show $U_{x,y}=T_{x,y}.$ Suppose $i\in T_{x,y}$. Then
$\sigma_x(i)=\sigma_y(i)=i'$, for some $i'$. But that implies
$\sigma_x^{-1}(i')=\sigma_y^{-1}(i')=i$, so $\sigma_x^{-1}(i')=i\in U_{x,y}$, and thus $T_{x,y}\subset U_{x,y}$. 
The opposite direction is shown similarly. Therefore, those
two sums in Eq. \eq{sum_cancel} cancel, and, moving the summation over $z$ and $i$ outside
the absolute values by the triangle inequality, we are left with
\begin{align}
W^{t-1}-W^t
\leq&\frac{1}{2\sqrt{|\pmb{\sigma}_X||\pmb{\sigma}_Y|}}\sum_{z,(\sigma_x,\sigma_y)\in R}\left(
\sum_{i:\sigma_x(i)\neq\sigma_y(i)}\left|
\alpha_{x,i,z}\alpha^*_{y,i,z}\right|
+\sum_{i:\sigma_x^{-1}(i)\neq\sigma_y^{-1}(i)}
\left|\alpha_{x,\sigma_x^{-1}(i),z}\alpha^*_{y,\sigma_y^{-1}(i),z}\right|\right).
\end{align}

Now we use the AM-GM to bound the terms in the absolute values:
\begin{align}\label{eq:two_sums}
W^{t-1}-W^t\leq&\frac{1}{2}\sum_{z,(\sigma_x,\sigma_y)\in R}\left(\sum_{i:\sigma_x(i)\neq\sigma_y(i)}
\left(
\sqrt{\frac{l_{y,i}}{l_{x,i}}}\frac{|\alpha_{x,i,z}|^2}{2|\pmb{\sigma}_X|}+
\sqrt{\frac{l_{x,i}}{l_{y,i}}}\frac{|\alpha^*_{y,i,z}|^2}{2|\pmb{\sigma}_Y|}\right)\right)\nonumber\\
&+\frac{1}{2}\sum_{z,(\sigma_x,\sigma_y)\in R}\left(
\sum_{i:\sigma_x^{-1}(i)\neq\sigma_y^{-1}(i)}
\left(
\sqrt{\frac{l_{y,i}}{l_{x,i}}}\frac{\left|\alpha_{x,\sigma_x^{-1}(i),z}\right|^2}{2|\pmb{\sigma}_X|}+
\sqrt{\frac{l_{x,i}}{l_{y,i}}}\frac{\left|\alpha^*_{y,\sigma_y^{-1}(i),z}\right|^2}{2|\pmb{\sigma}_Y|}\right)\right)
\end{align}

We now show that for $(\sigma_x,\sigma_y)\in R$, 
\begin{align}
\sum_{i:\sigma_x^{-1}(i)\neq\sigma_y^{-1}(i)}|\alpha_{x,\sigma_x^{-1}(i),z}|^2=&
\sum_{i:\sigma_x(i)\neq\sigma_y(i)}|\alpha_{x,i,z}|^2,\nonumber\\
\sum_{i:\sigma_x^{-1}(i)\neq\sigma_y^{-1}(i)}|\alpha_{y,\sigma_y^{-1}(i),z}|^2=&
\sum_{i:\sigma_x(i)\neq\sigma_y(i)}|\alpha_{y,i,z}|^2.
\end{align}
We prove the first equality, and the second is proven similarly. We define
\begin{align}
 T_{x,y}'&=[V]\setminus T_{x,y},\nonumber\\
 U_{x,y}'&=[V]\setminus U_{x,y}.
 \end{align} 
 Looking at the definition of $T_{x,y}$ and $U_{x,y}$, we see that
 \begin{align}
 T_{x,y}'&=\{i:\sigma_x(i)\neq\sigma_y(i)\}\nonumber\\
 U_{x,y}'&=\{\sigma_x^{-1}(i):\sigma_{x}^{-1}(i)\neq\sigma_{y}^{-1}(i)\}.
 \end{align}
 We previously showed $T_{x,y}=U_{x,y}$, so we have $T_{x,y}'=U_{x,y}'$.
 Therefore
 \begin{align}
\sum_{i:\sigma_x^{-1}(i)\neq\sigma_y^{-1}(i)}|\alpha_{x,\sigma_x^{-1}(i),z}|^2=
\sum_{i:U_{x,y}'}|\alpha_{x,j,z}|^2=
\sum_{i:T_{x,y}'}|\alpha_{x,i,z}|^2
\sum_{i:\sigma_x(i)\neq\sigma_y(i)}|\alpha_{x,i,z}|^2.
\end{align}
Thus, Eq. \eq{two_sums} becomes
\begin{align}
W^{t-1}-W^t\leq&\sum_{z,(\sigma_x,\sigma_y)\in R}\left(\sum_{i:\sigma_x(i)\neq\sigma_y(i)}\left(
\sqrt{\frac{l_{y,i}}{l_{x,i}}}\frac{|\alpha_{x,i,z}|^2}{2|\pmb{\sigma}_X|}+
\sqrt{\frac{l_{x,i}}{l_{y,i}}}\frac{|\alpha^*_{y,i,z}|^2}{2|\pmb{\sigma}_Y|}\right)\right).
\end{align}

Now we switch the order of summation and then use the definition of $l_{x,i}$ and $l_{y,i}$ to get
\begin{align}
W^{t-1}-W^t\leq&\sum_{i\in [V],z}\left(\sum_{(\sigma_x,\sigma_y)\in R:\sigma_x(i)\neq\sigma_y(i)}
\left(
\sqrt{\frac{l_{y,i}}{l_{x,i}}}\frac{|\alpha_{x,i,z}|^2}{2|\pmb{\sigma}_X|}+
\sqrt{\frac{l_{x,i}}{l_{y,i}}}\frac{|\alpha^*_{y,i,z}|^2}{2|\pmb{\sigma}_Y|}\right)\right)\nonumber\\
\leq&\sum_{i\in [V],z}
\left(\sum_{\sigma_x\in \pmb{\sigma}_X}
\sqrt{l_{x,i}\max_{\sigma_y:(\sigma_x,\sigma_y)\in R}l_{y,i}}\frac{|\alpha_{x,i,z}|^2}{2|\pmb{\sigma}_X|}+
\sum_{\sigma_y\in \pmb{\sigma}_Y}\sqrt{l_{y,i}\max_{\sigma_x:(\sigma_x,\sigma_y)\in R}l_{x,i}}\frac{|\alpha^*_{y,i,z}|^2}{2|\pmb{\sigma}_Y|}\right)\nonumber\\
\end{align}
Finally, using the definition of $l_{max}$ we have
\begin{align}
W^{t-1}-W^t\leq&\sqrt{l_{max}}\sum_{i\in [V],z}\left(\sum_{x\in \pmb{\sigma}_X}
\frac{|\alpha_{x,i,z}|^2}{2|\pmb{\sigma}_X|}
+\sum_{\sigma_y\in \pmb{\sigma}_Y}\frac{|\alpha^*_{y,i,z}|^2}{2|\pmb{\sigma}_Y|}\right)\nonumber\\
\leq& \sqrt{l_{max}},
\end{align}
where we have used that Eq. \eq{psi_star_def} is a normalized state.
\end{proof}

\section{Proofs of Lemmas \ref{lemm:fixing_subset} and \ref{lemm:X_conditions_subset}}\label{app:perm_proofs_subset}

\fixingsubset*

\begin{proof}
We prove the existence of $\mathbf{S}'$ by construction. Let
$\mathbf{S}$ be any subset family of $\mathbf{C}(N^2,N)$ such that
$|\mathbf{S}|\geq|\mathbf{C}(N^2,N)|2^{-p(n)}$. We construct
$\mathbf{S}'$ using the following procedure:

\vspace{.5cm}
\fbox {
\parbox{.9\linewidth}{
\noindent {\bf{Fixing Procedure}}
\begin{enumerate}
\item Set $\mathbf{S}'=\mathbf{S}$, and set $S_{\textrm{fixed}}=\emptyset$.
\item
\begin{enumerate}
\item Let $\nu(i)$ be the number of subsets $S\in {\mathbf{S}_X}$ such that
$i\in S$. 
\item If there
exists some element $i$ for which
\begin{align}
 |{\mathbf{S}_X}|>\nu(i)\geq |{\mathbf{S}_X}|N^{-\alpha}
 \end{align}  
 set $\mathbf{S}'\leftarrow\{S:S\in {\mathbf{S}_X}\text{ and } i\in S\},$ set 
 $S_\textrm{fixed}\leftarrow S_\textrm{fixed}\cup i,$ and return to step $2$(a).
 Otherwise exit the Fixing Procedure. 
\end{enumerate} 
\end{enumerate}
}
}
\vspace{.5cm}

The Fixing Procedure by construction will always return a set that 
satisfies Definition \ref{def:distr}. Now we just need to bound the size of $S_{\textrm{fixed}}.$

Let's suppose that at some point in the Fixing Procedure, for sets
$\mathbf{S}'$ and $S_\textrm{fixed},$ we have $.5N$ items fixed.
Suppose for contradiction there is some element $i^*\notin S_\textrm{fixed}$
that appears in greater than $N^{-\alpha}$ fraction of
$S\in\mathbf{S}'$.

Let us look at the set family
$\Sprime=\{S:S\in\mathbf{S}', i^*\in S\}.$
Because $(S_\textrm{fixed}\cup i^*)\subset S$ for all $S\in \Sprime$, there
are $.5N-1$ elements in each $S\in\Sprime$ that can be chosen
freely from the remaining $N^2-.5N-1$ un-fixed elements. Thus, we have
\begin{align}
 |\Sprime|\leq \binom{N^2-.5N-1}{.5N-1}.
 \end{align} 
By assumption
 $|\Sprime|\geq |\mathbf{S}'|N^{-\alpha},$
so 
\begin{align}\label{eq:T_upperbound}
|\mathbf{S}'|&\leq \binom{N^2-.5N-1}{.5N-1}N^{\alpha}\nonumber\\
&\leq \binom{N^2}{.5N}N^{\alpha}\nonumber\\
&\leq(2Ne)^{N/2}N^{\alpha}\nonumber\\
&=2^{N/2(\log(2e)+\log N)+\log(N)\alpha}\nonumber\\
&=2^{O(N)+(N/2)\log N}.
\end{align}

However, we can also bound the size of $\mathbf{S}'$ from the Fixing Procedure. 
Notice that at every step of the Fixing Procedure, the size of $\mathbf{S}'$ is
reduced by at most a factor $N^{-\alpha}$. Since we are assuming $.5N$ elements 
are in $S_\textrm{fixed}$, the Fixing
Procedure can reduce the original set $\mathbf{S}$ by at most a factor
$N^{-\alpha N/2}.$ Since $|\mathbf{S}|\geq|\mathbf{C}(N^2,N)|2^{-p(n)}$,
we have that at this point in the Fixing Procedure
\begin{align}\label{eq:S_lowerbound}
|\mathbf{S}'|&\geq|\mathbf{C}(N^2,N)|2^{-p(n)}N^{-\alpha N/2}\nonumber\\
&=\binom{N^2}{N}2^{-p(n)}N^{-\alpha N/2}\nonumber\\
&\geq N^{N}2^{-p(n)}N^{-\alpha N/2}\nonumber\\
&=2^{N\log N-p(n)-\log(N)\alpha N/2}\nonumber\\
&=2^{-O(N)+N\log N (1-\alpha/2)}.
\end{align}
Notice that as long as $\alpha<1$, for large enough $N$ (in particular,
for $N>2^{n^*}$ for some positive integer $n^*$, where $n^*$ depends on $\alpha$
and $p(\cdot)$), the bound of
Eq. \eq{S_lowerbound} will be larger than the bound of Eq. \eq{T_upperbound},
giving a contradiction. Therefore, our assumption must have been false, and more
than $N/2$ elements can not have been fixed during the Fixing Procedure. Therefore, 
the final set produced by the Fixing Procedure will satisfy point (2) of 
Definition \ref{def:distr}.
\end{proof}

\conditionssubset*

\begin{proof}
Note $\mathbf{S}_Y$ is non-empty, since only $.5N$ elements are in $S_\textrm{fixed}.$

We will use Theorem 6 from \cite{A00}. This result is identical to our Lemma \ref{lemm:ambainis},
except with standard oracles rather than permutation oracles.
We define the relation $\mathbf{R}$ as:
\begin{align}
\mathbf{R}=\{(S_x,S_y):S_x\in\mathbf{S}_X,S_y\in\mathbf{S}_Y\}.
\end{align}
Notice that each $S_x\in \pmb{S}_X$ is paired with every element of
 $\mathbf{S}_Y.$ Thus $m=|\mathbf{S}_Y|.$ Likewise $m'=|\mathbf{S}_X|.$

Now consider $(S_x,S_y)\in \mathbf{R}.$ We first consider the case of
some element $j$ such that $j\in S_x$ but  $j\notin S_y$. By our construction
of $\mathbf S_Y$, $j\notin S_\textrm{fixed}$. We upper
bound $l_{x,j}$, the number of $S_{y'}$ such that
$(S_x,S_{y'})\in \mathbf{R}$ and
$j\notin S_{y'}.$ We use the
trivial upper bound $l_{x,j}\leq|\mathbf{S}_Y|$, which is sufficient
for our purposes.  Next we need to upper bound $l_{y,j},$ the number
of $S_{x'}$ such that $(S_{x'},S_y)\in \mathbf{R}$ and
$j\in S_x.$ Since $S_y$ is paired with every element of $\mathbf S_X$
in $\mathbf{R}$, we just need to determine the number of sets in
$\mathbf S_X$ that contain $j$. Because $\mathbf S_X$
is $\alpha$-distributed, there can be at most $N^{-\alpha}|\mathbf S_X|$
elements of $\mathbf S_X$ that contain $j.$ In this case
we have 
\begin{align}\label{eq:lowerbound_alpha} 
l_{x,j}l_{y,j}\leq|\mathbf{S}_X||\mathbf{S}_Y|N^{-\alpha}. 
\end{align}

We now consider the case that $j\in S_y$ but $j\notin S_x$. (Note this
case only occurs when $S_\textrm{fixed}$ contains less than $0.99N$ elements.) We upper
bound $l_{y,j}$, the number of $S_{x'}$ such that
$(S_{x'},S_y)\in \mathbf{R}$ and
$j\notin S_{x'}.$  Again, we use the trivial
upper bound of $l_{y,j}\leq|\mathbf{S}_X|$, which is sufficient for
our analysis.  Next we upper bound $l_{x,j},$ the number of
$S_{y'}$ such that $(S_{x},S_{y'})\in \mathbf{R}$ and
$j\in S_{y'}.$ In our choice of $\mathbf{R}$, $S_x$ is paired
with every $S_y\in\mathbf S_Y$, so we need to count the
number of $S\in \mathbf S_Y$ that contain $j$. We have
\begin{align}
l_{x,j}=&\binom{N^2-S_\textrm{fixed}-1}{0.99N-S_\textrm{fixed}-1}\nonumber\\
=&\frac{0.99N-S_\textrm{fixed}}{N^2-S_\textrm{fixed}}|\mathbf{S}_Y|\nonumber\\
\leq &\frac{1}{N}|\mathbf{S}_Y|.
\end{align}
Therefore in this case, we have
\begin{align}\label{eq:lowerbound_alpha2}
l_{x,j}l_{y,j}\leq |\mathbf{S}_X||\mathbf{S}_Y|N^{-1}.
\end{align}
Looking at Eq. \eq{lowerbound_alpha} and Eq. \eq{lowerbound_alpha2}, we see that 
because $\alpha<1$, the bound of Eq. \eq{lowerbound_alpha} dominates, and so we have
that
\begin{align}
\sqrt{\frac{mm'}{l_{x,j}l_{y,j}}}\geq 
\sqrt{\frac{|\mathbf{S}_X||\mathbf{S}_Y|}{|\mathbf{S}_X||\mathbf{S}_Y|N^{-\alpha}}}
=N^{\alpha/2}.
\end{align}

Using the contrapositive of Lemma \ref{lemm:ambainis}, if an algorithm $G$ makes
less than $q$ queries to an oracle $\sop F_{S}$ where $S$ is promised
to be in $\mathbf{S}_X$ or $\mathbf{S}_Y$, there exists at least one element of $\mathbf{S}_X$ and
one element of $\mathbf{S}_Y$ such that the probability
of distinguishing between the corresponding oracles less than is $1/2+\epsilon$, where
\begin{align}
\frac{1}{2}\sqrt{2N^{-\alpha/2}q}&>\epsilon.
\end{align}
Equivalently, there exists at least one element of $\pmb{S}_X$ and
one element of $\pmb{S}_Y$ such that in order for $\sop A$ to distinguish 
them with constant bias, one requires $\Omega(N^{\alpha/2})$ queries.
\end{proof}


\section{Proof of Lemma \ref{lemma:random_to_fixed}}\label{app:rand}

\rand*

\begin{proof}
We denote the composition
of two CPTP maps with $\circ$, so $\mathscr E\circ\mathscr F$ means apply
$\mathscr F$ first, and then $\mathscr E.$

For each input $1^n$, $M$ applies an algorithm that takes as input
a standard basis state. Because
$S$ completely characterizes $\mathscr P_{\spi(S)}$, the optimal
witness will depend only on $S.$ 

Suppose on input $1^n$ to $M$, the algorithm is the following:
\begin{align}
\mathscr L_{AB}\circ (\mathscr O)_A\circ (\mathscr U_t)_{AB}\circ
 \cdots\circ (\mathscr U_2)_{AB}\circ (\mathscr O)_A
\circ (\mathscr U_1)_{AB}(\ketbra{w}{w}\otimes\ketbra{\psi_0}{\psi_0})_{AB}
\end{align}
where $\ketbra{w}{w}$ is the witness state (that depends only on $S$)
in the standard basis and $\mathscr U_i$
are fixed unitaries and $\mathscr L$. The two subspaces
$A$ and $B$ refer to the subset where the oracle acts $(A)$ and the rest
of the workspace $(B)$. The two subspaces do {\it{not}}
refer to the tensor product structure of the initial state.

For $i\in [N!(N^2-N)!]$
let $\pmb{\tau}^n=\{\tau_i\}$ be the set of permutations on the elements  of $[N^2]$ 
that do not mix the first $N$ elements with the last $N^2-N$ elements.
Then let $\sop P^{\textrm{C}}_{n}$
be the following control-permutation:
\begin{align}
\sop P^{\textrm{C}}_{n}\ket{i}\ket{j}=
\begin{cases}
\ket{i}\ket{\tau_i(j)} \textrm{ for }i\in [N!(N^2-N)!]\\
\ket{i}\ket{j} \textrm{ otherwise}.
\end{cases}
\end{align}
$\mathscr P ^{\textrm{C}}_{n}$ is the respective CPTP map.

$\sop P^{\textrm{C}}_{n}$ is a completely known unitary that is independent
of the oracle, however, we do not know how to implement this unitary
in polynomial time. This unitary is the reason we consider the class
$\expQCMA$ in this proof rather than the more standard $\QCMA.$ Ultimately,
we care about query complexity - not the complexity of the unitaries
that occur between the oracle applications.

Let
\begin{align}
\ket{\chi_n}=\frac{1}{\sqrt{N!(N^2-N)!}}\sum_{i=1}^{N!(N^2-N)!}\ket{i}
\end{align}

Then on input $1^n$ we have $\tilde{M}$ implement the algorithm
\begin{align}
\mathscr L_{AB}\circ  (\mathscr P^{\textrm{C}}_{n})_{C_tA}&\circ (\mathscr O)_A\circ (\mathscr U_t)_{AB}\circ
\cdots\nonumber\\
&\circ (\mathscr U_2)_{AB}\circ  (\mathscr P^{\textrm{C}}_{n})_{C_1A}\circ (\mathscr O)_A
\circ (\mathscr U_1)_{AB}\left(\ketbra{\chi_n}{\chi_n}^t_C\otimes(\ketbra{w}{w}\otimes\ketbra{\psi_0}{\psi_0})_{AB}\right)
\end{align}
where  $(\mathscr P^{\textrm{C}}_{\tau})_{C_iA}$ means the $C_i\tth$
register controls the $A\tth$ register, and initially, the $C_i\tth$ register
is the $i\tth$ copy of $\ket{\chi_n}$, and $\mathscr O$ is the
CPTP version of the oracle $\sop O$. 

Let $\rho_i(\mathscr O)$ (resp. ${\tilde{\rho}_i(\sop O)}$) be the state of the
system during the algorithm $M$ (resp. $\tilde{M}$) after the
$i\tth $ use of the oracle. Let $\rho_0(\mathscr O)$
(resp. ${\tilde{\rho}_0(\sop O)}$) be the initial state of the respective
algorithms. Then we will show that
\begin{align}
\rho_i({\mathscr P_{\spi(\sperm(\sigma))}})=\tr_{C_1,\dots,C_t}
\left({\tilde{\rho}_i({\sop P_{\sigma}})}\right).
\end{align}
As a consequence of this, the probability distribution of measurement outcome of the
two algorithms will be identical.

We prove this by induction. For the initial step, we have
\begin{align}
\rho_0({\mathscr P_{\spi(\sperm(\sigma))}})=\ketbra{w}{w}\otimes\ketbra{\psi_0}{\psi_0}
\end{align}
while
\begin{align}
\tr_C(\tilde{\rho}_0({\sop P_{\sigma}}))=&
\tr_C\left(\ketbra{\chi_n}{\chi_n}^t_C\otimes(\ketbra{w}{w}\otimes\ketbra{\psi_0}{\psi_0})_{AB}\right)\nonumber\\
=&\ketbra{w}{w}\otimes\ketbra{\psi_0}{\psi_0}.
\end{align}
For the induction step, we need to show
\begin{align}
\rho_k({\mathscr P_{\spi(\sperm(\sigma))}})=\tr_{C_1,\dots,C_t}
\left({\tilde{\rho}_k({\sop P_{\sigma}}})\right).
\end{align}
Because $\mathscr P_{\spi(\sperm(\sigma))}$ has an equal probability of applying $\sop P_\sigma$ for
each $\sigma$ such that $S(\sigma)=S$, we have 
\begin{align}
\rho_k({\mathscr P_{\spi(\sperm(\sigma))}})=&\mathscr P_{\spi(\sperm(\sigma))}
\left(\sop U_k\rho_{k-1}({\mathscr P_{\spi(\sperm(\sigma))}})\sop U_k^\dagger\right)\nonumber\\
=&\frac{1}{N!(N^2-N)!}\sum_{i=1}^{N!(N^2-N)!}\sop P_{\tau_i}
\sop P_\sigma\sop U_k\rho_{k-1}({\mathscr P_{\spi(\sperm(\sigma))}})
\sop U_k^\dagger\sop P_\sigma^\dagger
\sop P_{\tau_i}^\dagger.
\end{align}

On the other hand
\begin{align}
\tr_C\left(\tilde{\rho}_k({\sop P_{\sigma}})\right)=&\tr_C\left((\sop P^\textrm{C}_{\tau})_{C_kA}
\sop P_\sigma\sop U_k(\tilde{\rho}_{k-1}({\sop P_{\sigma}}))
\sop U_k^\dagger\sop P_\sigma^\dagger
(\sop P^\textrm{C}_{\tau})^\dagger_{C_kA}\right)\nonumber\\
=&\frac{1}{N!(N^2-N)!}\sum_{i=1}^{N!(N^2-N)!}\sop P_{\tau_i}
\sop P_\sigma\sop U_k\tr_C\left(\tilde{\rho}_{k-1}({\sop P_{\sigma}})\right)\sop U_k^\dagger\sop P_\sigma^\dagger
\sop P_{\tau_i}^\dagger
\end{align}

Now we need to show $\tilde{M}$ decides $L_\orcl$ for
a {\pc} oracle $\orcl.$ Let's consider
an input $1^n$. Suppose $\sop O_n=\sop P_\sigma$, where
$\sperm(\sigma)\in\mathbf{S}^n\even.$ Then because $M$ decides
$L_\orcl$ for any {\rpc}, there exists 
a witness $w$ that depends only on $\sperm(\sigma)$ such that when
the oracle is $\mathscr P_{\spi(\sperm(\sigma))}$
the output of $M$ is 1 with probability at least 2/3. Using
the same witness $w$, $\tilde{M}$ will therefore
produce output 1 with probability at least 2/3. 

Now consider an input $1^n$ such that $\sop O_n=\sop P_\sigma$
where $\sperm(\sigma)\in\mathbf{S}^n\odd.$ Because $M$ decides $L_\orcl$
for a {\rpc} oracle $\orcl$, when $M$ is run
with the oracle $\mathscr P_{\spi(\sperm(\sigma))}$, for any witness $w$,
$M$ will output 1 with probability at most 1/3. But because $\tilde{M}$
will have the same probability distribution of outcomes, this means that for any 
witness $w$ to $\tilde{M}$, with oracle $\sop P_\sigma$, $\tilde{M}$ will
output 1 with probability at most 1/3.
\end{proof}


\section{Proofs of Lemmas \ref{lemm:fixing} and \ref{lemm:X_conditions}}\label{app:perm_proofs}

\fixing* 

\begin{proof}
We prove the existence of $\mathbf{S}'$ by construction. Let
$\mathbf{S}$ be any subset of $\mathbf{S}_\textrm{even}^n$ such that
$|\mathbf{S}|\geq|\mathbf{S}^n_\textrm{even}|2^{-p(n)}$. We construct
$\mathbf{S}_X$ using the following procedure, which we call the fixing
procedure.

\vspace{.5cm}
\fbox {
\centering
\parbox{.9\linewidth}{
\noindent {\bf{Fixing Procedure}}
\begin{enumerate}
\item Set $\mathbf{S}_X=\mathbf{S}$, and set $S_{\textrm{fixed}}=\emptyset$.
\item
\begin{enumerate}
\item Let $\nu(i)$ be the number of subsets $S\in {\mathbf{S}_X}$ such that
$i\in S$. 
\item If there
exists some element $i$ for which
\begin{align}
 |{\mathbf{S}_X}|>\nu(i)\geq |{\mathbf{S}_X}|N^{-\alpha}
 \end{align}  
 set $\mathbf{S}_X\leftarrow\{S:S\in {\mathbf{S}_X}\text{ and } i\in S\},$ set 
 $S_\textrm{fixed}\leftarrow S_\textrm{fixed}\cup i,$ and return to step $2$(a).
 Otherwise exit the Fixing Procedure. 
\end{enumerate} 
\end{enumerate}
}
}
\vspace{.5cm}

By construction, $\mathbf S_X$ will satisfy Definition \ref{def:distr}.
Now we need to check
that the Fixing Procedure stops before fixing more than $N/3$ even items.

Let's suppose that at some point in the Fixing Procedure, for sets $\mathbf{S}_X$
and $S_\textrm{fixed},$ we have $N/3$ even items fixed. Suppose for contradiction, that at this point,
there is some even element $i^*$ such that $i^*$ appears in greater than $N^{-\alpha}$
fraction of $S\in\mathbf{S}_X$. Let's also assume without loss of generality that
$|S_\textrm{fixed}\cap \mathbb Z\odd|=k\odd\leq N/3$.  

Let us look at the set
\begin{align}
\Sprime=\{S:S\in\mathbf{S}_X, i^*\in S\}.
\end{align}
Because $(S_\textrm{fixed}\cup i^*)\subset S$ for all $S\in \Sprime$, there
are $N/3-1$ even elements that can be freely chosen for $S\in\Sprime$
and $N/3-k\odd$ odd elements that can be freely chosen. Thus, we have
\begin{align}
 |\Sprime|\leq \binom{N^2/2-N/3-1}{N/3-1}\binom{N^2/2-k\odd}{N/3-k\odd}.
 \end{align} 
By assumption
\begin{align}
 |\Sprime|\geq |\mathbf{S}_X|N^{-\alpha},
\end{align}
so 
\begin{align}\label{eq:S_upperbound_parity}
|\mathbf{S}_X|&\leq \binom{N^2/2-N/3-1}{N/3-1}\binom{N^2/2-k\odd}{N/3-k\odd}N^{\alpha}\nonumber\\
&\leq \binom{N^2/2}{N/3}\binom{N^2/2}{N/3}N^{\alpha}\nonumber\\
&\leq(3Ne/2)^{2N/3}N^{\alpha}\nonumber\\
&=2^{2N/3(\log(3e/2)+\log N)+\log(N)\alpha}\nonumber\\
&=2^{O(N)+(2N/3)\log N}.
\end{align}

However, we can also bound the size of $\mathbf{S}_X$ from the Fixing Procedure. 
Notice that at every step of the Fixing Procedure, the size of $\mathbf{S}_X$ is
reduced by at most a factor $N^{-\alpha}$. Since we are assuming $N/3$ even elements 
are in $S_\textrm{fixed}$
and $k\odd\leq N/3$ odd elements are in $S_\textrm{fixed}$, the Fixing
Procedure can reduce the original set family $\mathbf{S}$ by at most a factor
$N^{-\alpha(2N/3)}.$ Since $|\mathbf{S}|\geq|\mathbf{S}^n\even|2^{-p(n)}$,
we have that at this point in the Fixing Procedure
\begin{align}\label{eq:S_lowerbound_parity}
|\mathbf{S}_X|&\geq|\mathbf{S}^n\even|2^{-p(n)}N^{-\alpha(2N/3)}\nonumber\\
&=\binom{N^2/2}{2N/3}\binom{N^2/2}{N/3}2^{-p(n)}N^{-\alpha(2N/3)}\nonumber\\
&\leq(3N/4)^{2N/3}(3N/4)^{N/3}2^{-p(n)}N^{-\alpha(2N/3)}\nonumber\\
&=2^{2N/3(\log(3/4)+\log N)+N/3(\log(3/4)+\log N)-p(n)-\log(N)\alpha 2N/3}\nonumber\\
&=2^{N\log N (1-2\alpha/3)+N\log (3/4)-p(\log(N))}\nonumber\\
&=2^{-O(N)+N\log N (1-2\alpha/3)}.
\end{align}
Notice that as long as $\alpha<1/2$, for large enough $N$ (in particular,
for $N>2^{n^*}$ for some positive integer $n^*$, where $n^*$ depends on $\alpha$
and $p(\cdot)$), the bound of
Eq. \eq{S_lowerbound_parity} will be larger than the bound of Eq. \eq{S_upperbound_parity},
giving a contradiction. Therefore, our assumption, must have been false, and
at this point in the Fixing Procedure, all even elements $i\in[N^2]/S_\textrm{fixed}$ will
appear in at most a fraction $N^{-\alpha}$ of $S\in\mathbf{S}_X$. Thus at the next step of the 
Fixing Procedure, an even element will not be added to $S_\textrm{fixed}$, and the number
of even elements in $S_\textrm{fixed}$ will stay bounded by $N/3.$ The same logic can be reapplied
at future steps of the Fixing Procedure, even if additional odd items are added.
\end{proof}

\Xconditions*

\begin{proof}
Since $\mathbf{S}_X$ is $\alpha$-distributed, there exists a set $S_\textrm{fixed}$
of elements such that $S_\textrm{fixed}\subset S$ for all $S\in\mathbf{S}_X$,
 where $S_\textrm{fixed}$ contains at most $N/3$
odd elements and at most $N/3$ even elements. (Otherwise Condition (2) 
of Definition \ref{def:distr} will not be satisfied.)
We choose
\begin{align}
\pmb{\sigma}_Y&=\{\sigma:\sperm(\sigma)\in\mathbf{S}_Y\},\nonumber\\
\pmb{\sigma}_X&=\{\sigma:\sperm(\sigma)\in \mathbf{S}_X\}.
\end{align} 

We now define the relation $\mathbf{R}$ needed to apply our adversary bound.
For each $(S_x,S_y)\in \mathbf{S}_X\times\mathbf{S}_Y$, we will create
a one-to-one matching in $\mathbf{R}$ between the elements of $\spi(S_x)$ and
$\spi(S_y)$. We first choose any  element $\sigma_x^*\in
\spi(S_x).$ Then we choose a permutation
$\sigma_y^*\in\spi(S_y)$ such that
\begin{itemize}
\item $\forall j\in (S_x\cap S_y), \sigma_x^*(j)=\sigma_y^*(j)$, 
\item $\forall j\in [N^2]\setminus(S_x\cup S_y), \sigma_x^*(j)=\sigma_y^*(j),$
\item $\forall j\in S_x\setminus(S_x\cap S_y), \exists i\in S_y\setminus(S_x\cap S_y)
\textrm{ such that } \sigma_x^*(j)=\sigma_y^*(i) \textrm{ and }
\sigma_x^*(i)=\sigma_y^*(j)$.
\end{itemize}
Since every permutation corresponding to $S_y$ is in $\spi(S_y)$, there
will always be such a $\sigma_y^*$ that satisfies the above criterion.
We choose $(\sigma_x^*,\sigma_y^*)\in \mathbf{R}.$

For $i\in [N!(N^2-N)!]$
let $\pmb{\tau}^n=\{\tau_i\}$ be the set of permutations on the elements of $[N^2]$ 
that do not mix the first $N$ elements with the last $N^2-N$ elements.
By $\sigma_a\circ\sigma_b$, we mean apply first permutation $\sigma_b$,
and  then permutation  $\sigma_a$.  Notice that
\begin{align}
\spi(S_x)=&\{\tau\circ\sigma_x^*:\tau\in \pmb{\tau}^n\}\nonumber\\
\spi(S_y)=&\{\tau\circ\sigma_y^*:\tau\in \pmb{\tau}^n\}.
\end{align}
Furthermore given $\tau\in \pmb{\tau}^n,$ we have
\begin{itemize}
\item $\forall j\in (S_x\cap S_y), \tau\circ \sigma_x^*(j)=\tau\circ \sigma_y^*(j)$, 
\item $\forall j\in [N^2]\setminus(S_x\cup S_y), \tau\circ \sigma_x^*(j)=\tau\circ \sigma_y^*(j),$
\item $\forall j\in S_x\setminus(S_x\cap S_y), \exists i\in S_y\setminus(S_x\cap S_y)
\textrm{ such that } \tau\circ \sigma_x^*(j)=\tau\circ \sigma_y^*(i) \textrm{ and }
\sigma\circ \sigma_x^*(i)=\sigma\circ \sigma_y^*(j)$.
\end{itemize}

For every $\tau\in \pmb{\tau}^n$, we set 
$(\tau\circ\sigma_x^*,\tau\circ\sigma_y^*)\in \mathbf{R}.$ In doing so, we create a
one-to-one correspondance in $\mathbf{R}$ between the elements of $\spi(S_x)$
and $\spi(S_y).$ We then repeat this process for all pairs $(S_x,S_y)\in 
\mathbf{S}_X\times\mathbf{S}_Y.$
The end result is the $\mathbf{R}$ that we will use.

Now we need to analyze the properties of this $\mathbf{R}.$
Notice that each $\sigma_x\in \pmb{\sigma}_X$ is paired to exactly one element  of $\spi(S_y)$
for each $S_y\in \mathbf{S}_Y.$ Thus $m=|\mathbf{S}_Y|.$ Likewise $m'=|\mathbf{S}_X|.$

Now consider $(\sigma_x,\sigma_y)\in \mathbf{R}.$ We consider some element $j$ such that
$\sigma_x(j)\neq\sigma_y(j)$. We first consider the case that
$j\in S_x$. We upper bound $l_{x,j}$, the number of $\sigma_{y'}$ such that $(\sigma_x,\sigma_{y'})\in \mathbf{R}$
and $\sigma_{y'}(j)\neq\sigma_x(j).$ To simplify analysis, 
we use the simple upper bound $l_{x,j}\leq|\mathbf{S}_Y|$, which is sufficient for our purposes.
 Next we need to upper bound $l_{y,j},$ the number
of $\sigma_{x'}$ such that $(\sigma_{x'},\sigma_y)\in \mathbf{R}$ and 
$\sigma_{x'}(j)\neq\sigma_y(j).$ By our construction 
of $\mathbf{R}$, we have $j\notin S_y$. Also, by construction, if $j\notin S_y$, $\sigma_{x'}(j)\neq\sigma_y(j)$ 
if and only if $j\in S_{x'}$. Since $\sigma_y$ is paired to only one element $\sigma_x$ for each
set $S_x$, $l_{y,j}$ is bounded by the number of sets $S_x\in  \mathbf{S}_X$ such that $j\in S_x$.
Because $\mathbf{S}_X$ is $\alpha$-distributed, 
at most a fraction $N^{-\alpha}$ of the sets of $\mathbf{S}_X$
can contain $j$, so $l_{y,j}\leq |\mathbf{S}_X|N^{-\alpha}$. In this case we have
\begin{align}\label{eq:lowerbound_alpha_parity}
l_{x,j}l_{y,j}\leq |\mathbf{S}_X||\mathbf{S}_Y|N^{-\alpha}.
\end{align}

We now consider the case that
$j\in S_y$. We upper bound $l_{y,j}$, the number of $\sigma_{x'}$ such that $(\sigma_{x'},\sigma_y)\in \mathbf{R}$
and $\sigma_{x'}(j)\neq\sigma_y(j).$ To simplify analysis, 
we use the upper bound of $l_{y,j}\leq|\mathbf{S}_X|$, which is sufficient for our analysis.
 Next we need to upper bound $l_{x,j},$ the number
of $\sigma_{y'}$ such that $(\sigma_{x},\sigma_{y'})\in \mathbf{R}$ and 
$\sigma_{y'}(j)\neq\sigma_x(j).$ By our construction 
of $\mathbf{R}$, we have $j\notin S_x$. Also, by construction, if $j\notin S_x$, $\sigma_{y'}(j)\neq\sigma_x(j)$ 
if and only if $j\in S_{y'}$. Since $\sigma_x$ is paired to only one element $\sigma_x$ for each
set $S_x$, $l_{x,j}$ is bounded by the number of sets $S_y\in  \mathbf{S}_Y$ such that $j\in S_y$.
Suppose $S_\textrm{fixed}$ contains $k\even$ even elements and $k\odd$ odd elements. If $j$ is odd,
we have
\begin{align}
l_{x,j}&=\binom{N^2/2-k\odd-1}{2N/3-k\odd-1}\binom{N^2/2-k\even}{N/3-k\even}\nonumber\\
&\leq \frac{2N/3}{N^2/2-N/3}|\mathbf{S}_Y|,
\end{align}
while if $j$ is even (in that case, we must have $k\even<N/3$), we have
\begin{align}
l_{x,j}&=\binom{N^2/2-k\odd}{2N/3-k\odd}\binom{N^2/2-k\even-1}{N/3-k\even-1}\nonumber\\
&\leq \frac{N/3}{N^2/2-N/3}|\mathbf{S}_Y|,
\end{align}
where we've used that
\begin{align}
|\mathbf{S}_Y|=\binom{N^2/2-k\odd}{2N/3-k\odd}\binom{N^2/2-k\even}{N/3-k\even}.
\end{align}

Therefore in this case, we have
\begin{align}\label{eq:lowerbound_alpha2_parity}
l_{x,j}l_{y,j}= |\mathbf{S}_X||\mathbf{S}_Y|O(N^{-1}).
\end{align}
Looking at Eq. \eq{lowerbound_alpha_parity} and Eq. \eq{lowerbound_alpha2_parity}, we see that 
because $\alpha<1$, the bound of Eq. \eq{lowerbound_alpha_parity} dominates, and so we have
that
\begin{align}
\sqrt{\frac{mm'}{l_{x,j}l_{y,j}}}\geq 
\sqrt{\frac{|\mathbf{S}_X||\mathbf{S}_Y|}{|\mathbf{S}_X||\mathbf{S}_Y|N^{-\alpha}}}
=N^{\alpha/2}.
\end{align}

Using the contrapositive of Lemma \ref{lemm:ambainis}, if an algorithm $G$ makes
less than $q$ queries to an oracle $\sop O_{\sigma_x}$ where $\sigma_x$ is promised
to be in $\pmb{\sigma}_X$ or $\pmb{\sigma}_Y$, there exists at least one element of $\pmb{\sigma}_X$ and
one element of $\pmb{\sigma}_Y$ such that the probability
of distinguishing between the corresponding oracles less than is $1/2+\epsilon$, where
\begin{align}
\frac{1}{2}\sqrt{2N^{-\alpha/2}q}&>\epsilon.
\end{align}
Equivalently, there exists at least one element of $\pmb{\sigma}_X$ and
one element of $\pmb{\sigma}_Y$ such that in order for $\sop A$ to distinguish 
them with constant bias, one requires $\Omega(N^{\alpha/2})$ queries.
\end{proof}

\end{document}